\newcommand{\ri}{\RightLabel}
\newcommand{\f}{\footnotesize}
\newcommand{\p}[1]{[  #1 ]}
\newcommand{\msf}{\mathsf}
\newcommand{\mrm}{\mathrm}
\newcommand{\mc}{\mathcal}
\newcommand{\Imp}{\Rightarrow}
\newcommand{\GP}{G_\msf{PAL}}
\newtheorem{thm}{Theorem}
\newtheorem{defn}[thm]{Definition}
\newtheorem{lem}[thm]{Lemma}
\newtheorem{cor}[thm]{Corollary}
\newtheorem{prop}[thm]{Proposition}
\newtheorem{exam}[thm]{Example}
\newtheorem{rem}[thm]{Remark}
\newcommand{\keywords}[1]{\par\addvspace\baselineskip
\noindent\keywordname\enspace\ignorespaces#1}
\begin{document}
\mainmatter  

\title{A Labelled Sequent Calculus for Public Announcement Logic}

%
%
\author{Hao Wu\inst{1} \and
Hans van Ditmarsch\inst{2}\and
Jinsheng Chen\inst{3}}
\authorrunning{Wu et al.}
%
\institute{Institute of Logic and Cognition, Sun Yat-sen University \and
Department of Computer Science, Open University of the Netherlands
 \and
School of Philosophy, Zhejiang University}
%

%
%

\maketitle

\begin{abstract}
Public announcement logic ($\msf{PAL}$) is an extension of epistemic logic ($\msf{EL}$) with some reduction axioms. In this paper, we propose a cut-free labelled sequent calculus for $\msf{PAL}$, which is an extension of that for $\msf{EL}$ with sequent rules adapted from the reduction axioms. This calculus admits cut and allows terminating proof search. 

\keywords{Public announcement logic, labelled sequent calculus, cut elimination.}
\end{abstract}

\section{Introduction}

The modern modal approach to the logic of knowledge and belief was extensively developed by Hintikka \cite{Hintikka1962} to interpret epistemic notions utilizing possible world semantics. The standard multi-agent epistemic logic $\msf{EL}$ is usually identified with the poly-modal modal logic $\msf{S5}$ with a group of agents. Public announcement logic (PAL), introduced in Plaza \cite{Plaza1989}, is an extension of $\msf{EL}$ that studies logical dynamics of epistemic information after the action of public announcement. More general actions are studied in action model logic (see e.g. \cite{vB2011,DIT2007,ALS1998}).

$\msf{PAL}$ is an extension of $\msf{EL}$ with announcement operators of the form $\p{\varphi}$. As $\msf{EL}$, formulas in $\msf{PAL}$ are interpreted in Kripke models in which all relations are reflexive, transitive and symmetric. In particular, formulas of the form $\p{\varphi}\psi$ are interpreted in the restrictions of Kripke models induced by the announcement $\varphi$. The Hilbert-style axiomatization of $\msf{PAL}$ is obtained by adding to that of $\msf{EL}$ the so called \emph{reduction axioms} for announcement operators, which can be used to eliminate announcement operators in a $\msf{PAL}$-formula.

Generally speaking, it is difficult to prove whether proof search using a Hilbert-style axiomatization is decidable. In view of these, many proof systems for $\msf{PAL}$ are proposed in the literature, e.g., tableau systems \cite{BAL2010}, labelled sequent calculi \cite{balbiani2014sequent,NOM2016}.

In this paper, we propose another labelled sequent calculus for $\msf{PAL}$. This calculus is based on a labelled sequent calculus for $\msf{EL}$ proposed in Hakli and Negri \cite{hakli2007proof} and rules for announcement operators designed according to the reduction axioms. This calculus admits structural rules (including cut) and allows terminating proof search. Unlike \cite{balbiani2014sequent,NOM2016}, which are based on another semantics for $\msf{PAL}$, our calculus is based on the original semantics and takes into account the conditions of reflexivity, transitivity and symmetry in $\msf{EL}$.

The rest of the paper is structured as follows: Section 2 presents some basic notions and concepts. Section 3 presents our labelled sequent calculus $G_\msf{PAL}$ for $\msf{PAL}$. Section 4 shows that $G_\msf{PAL}$ admits some structural rules, including cut. Section 5 shows that $G_\msf{PAL}$ allows terminating proof search.
Section 6 compares $G_\msf{PAL}$ with related works, concludes the paper and discusses future work.

\section{Preliminaries}

\subsection{$\msf{EL}$ and $\msf{PAL}$}\label{defn:preli}
Let $\msf{Prop}$ be a denumerable set of variables and $\msf{Ag}$ a finite set of agents. Language $\mc{L}_\msf{EL}$ for epistemic logic is defined inductively as follows:
\[
\mc{L}:: = p \mid \neg \varphi \mid \varphi \land \varphi \mid \varphi \to \varphi \mid K_a \varphi
\]
Moreover, language $\mc{L}_\msf{PAL}$ for public announcement logic is defined inductively as follows:
\[
\mc{L}:: = p \mid \neg \varphi \mid \varphi \land \varphi \mid \varphi \to \varphi \mid K_a \varphi \mid  \p{\varphi}\varphi
\]
where $p \in \msf{Prop}$ and $a \in \msf {Ag}$. We use $\varphi \leftrightarrow \psi$ as an abbreviation for $(\varphi \to \psi)\land (\psi \to \varphi)$. $\mc{L}_\msf{PAL}$ is an extension of $\mc{L}_\msf{EL}$ with announcement formulas.

An \emph{epistemic frame} $\mc{F}$  is a tuple $(W,\{\sim_a\}_{a\in \msf{Ag}})$, where $W$ is a set of states and $\sim_a \subseteq W\times W$ is a reflexive, transitive and symmetric relation for each $a\in \msf{Ag}$.

An \emph{epistemic model} $\mc{M}$ is a tuple $(W,\{\sim_a\}_{a\in \msf{Ag}},V)$ where $(W,\{\sim_a\}_{a\in \msf{Ag}})$ is an epistemic frame and $V$ is a function from $\msf{Prop}$ to $\mc{P}(W)$.

Let $\mc{M}=(W,\{\sim_a\}_{a\in \msf{Ag}},V)$ be an epistemic model and $w \in W$. The notion of $\varphi$ being \emph{true} at $w$ in $\mc{M}$ (notation: $\mc{M},w \Vdash \varphi$) is defined inductively as follows:
 \[
\begin{array}{lll}
\mathcal{M},w \Vdash p &~\text{iff}~ & w\in V(p)
\\
\mathcal{M},w \Vdash \neg \varphi      &~\text{iff}~& \mathcal{M},w \nVdash \varphi
\\
\mathcal{M},w \Vdash \varphi \land \psi   &  ~\text{iff}~          & \mathcal{M},w \Vdash \varphi        ~\text{and}~         \mathcal{M},w \Vdash \psi
\\
\mathcal{M},w \Vdash \varphi \to \psi   &  ~\text{iff}~          & \mathcal{M},w \nVdash \varphi        ~\text{or}~         \mathcal{M},w \Vdash \psi
\\
\mathcal{M},w \Vdash K_{a}\varphi & ~\text{iff}~ & \text{for~all}~ v  \in W, w\sim_{a}v ~\text{implies}~ \mathcal{M},v \Vdash \varphi
\\
\mathcal{M},w \Vdash \p{\varphi}\psi  &~\text{iff}~  & \mathcal{M},w \Vdash \varphi  ~\text{implies}~ \mathcal{M}^{\varphi},w \Vdash \psi
\end{array}
\]
where $\mathcal{M}^{\varphi}= (W^\varphi, \{\sim^\varphi_a\}_{a\in \msf{Ag}},V^\varphi)$ is called the \emph{model restricted to $\varphi$} where 
$W^\varphi = \{w \in W\mid \mc{M}, w \Vdash \varphi\}$, 
$\sim^\varphi_a = \sim_a \cap (W_\varphi \times W_\varphi)$ and $V^\varphi= V(p) \cap W^\varphi$.

A formula $\varphi$ is \emph{globally true} in an epistemic model (notation: $\mc{M} \Vdash \varphi$) if $\mc{M},w\Vdash \varphi$ for all $w \in W$. A formula $\varphi$ is \emph{valid} in an epistemic frame $\mc{F}$ if $\mc{F}, V\Vdash\varphi$ for all valuations $V$.

Epistemic logic $\msf{EL}$ is the set of $\mc{L}_\msf{EL}$-formulas that are valid on the class of all epistemic frames. Public announcement logic $\msf{PAL}$ is the set of $\mc{L}_\msf{PAL}$-formulas that are valid on the class of all epistemic frames.

$\msf{EL}$ is axiomatized by (Tau), (K), (4), (T), (5), (MP) and (GK$_a$). $\msf{PAL}$ is axiomatized by the axiomatization for $\msf{EL}$ plus \emph{reduction axioms}  (R1--6):
\begin{center}
\begin{tabular}{cc}
\begin{tabular}{rl}
(Tau) & Classical propositional tautologies.\\
(K) & $K_a(\varphi\to\psi)\to(K_a\varphi \to K_a\psi)$\\
(4) & $K_a\varphi\to K_aK_a\varphi$\\
(T) & $K_a\varphi\to \varphi$\\
(5) & $\neg K_a\varphi \to K_a\neg K_a\varphi$\\
(MP) & From $\varphi$ and $\varphi\to\psi$ infer $\psi$.\\
(GK$_a$) & From $\varphi$ infer $K_a\varphi$.\\
\end{tabular}
&
\begin{tabular}{rl}
(R1) & $\p{\varphi}p\leftrightarrow (\varphi\to p)$\\
(R2) & $\p{\varphi}\neg \psi \leftrightarrow (\varphi\to \neg \p{\varphi} \psi)$\\
(R3) & $\p{\varphi}(\psi\wedge\chi)\leftrightarrow (\p{\varphi}\psi\wedge\p{\varphi}\chi)$\\
(R4) & $\p{\varphi}(\psi\to\chi)\leftrightarrow(\p{\varphi}\psi\to\p{\varphi}\chi)$\\
(R5) & $\p{\varphi}K_a\psi \leftrightarrow (\varphi\to K_a\p{\varphi}\psi)$\\
(R6) & $\p{\varphi}\p{\psi}\chi\leftrightarrow \p{\varphi\wedge\p{\varphi}\psi}\chi$\\
\end{tabular}
\end{tabular}
\end{center}

\begin{rem}
The standard language for $\msf{PAL}$ does not contain $\to$. To simplify our writing, we add $\to$ to our language. Because of its existence, $\mrm{(R4)}$ is added to the axiomatization.
\end{rem}

\subsection{Labelled sequent calculus} 
A labelled sequent calculus for a logic with Kripke semantics is based on the internalization of Kripke semantics. 

Let $\mc{F}= (W, \{\sim_a\}_{a \in \msf{Ag}})$ be an epistemic frame. A \emph{relational atom} is of the form $x\sim_a y$, where $x,y \in W$ and $a\in \msf{Ag}$. A \emph{labelled formula} is of the form $x :\varphi$, where $x \in W$ and $\varphi$ is an $\mc{L}_\msf{EL}$-formula. We use $\sigma,\delta$ with or without subscript to denote relational atoms or labelled formulas.

A \emph{multiset} is a `set with multiplicity', or put the other way round,  a sequence modulo the ordering. A \emph{labelled sequent} is of the form $\Gamma \Imp \Delta$ where $\Gamma,\Delta$ are finite multisets of relational atoms and labelled formulas. A \emph{sequent rule} is of the form 
\[
(\mc{R})~\frac{\Gamma_1\Imp \Delta_1~\ldots~ \Gamma_m\Imp \Delta_m}{\Gamma \Imp \Delta}
\]
where $m\ge 0$. $\Gamma_1\Imp \Delta_1,\ldots ,\Gamma_m\Imp \Delta_m$ are called \emph{premises} of this rule and $\Gamma \Imp \Delta$ is called the conclusion. If $m=0$, we simply write $\Gamma \Imp \Delta$ and call it an \emph{initial sequent}. 
The formula with the connective in a rule is the \emph{principal} formula of that rule, and its components in the premisses are the \emph{active} formulas.
A \emph{labelled sequent calculus} is a set of sequent rules. A derivation in a labelled sequent calculus $G$ is defined as usual.  The \emph{derivation height h} of a sequent is defined as the length of longest branch in the derivation of the sequent. 
We use $G\vdash \Gamma \Imp \Delta$ to denote that $\Gamma \Imp \Delta$ is derivable in $G$ and $G\vdash_h \Gamma \Imp \Delta$ to denote that $\Gamma \Imp \Delta$ is derivable in $G$ with a derivation the height of which is at most $h$.

A sequent rule $(\mc{R})$ 
is \emph{admissible} in $G$ if $G\vdash \Gamma_i \Imp \Delta_i$ for $1\le i\le m$ implies $G\vdash \Gamma \Imp \Delta$. It is \emph{height-preserving admissible} if $G\vdash_h \Gamma_i \Imp \Delta_i$ for $1\le i\le m$ implies $G\vdash_h \Gamma \Imp \Delta$.

Let $\mc{M}= (W, \{\sim_a\}_{a \in \msf{Ag}},V)$ be an epistemic model. The \emph{interpretation} $\tau_\mc{M}$ of relational atoms and labelled formulas are defined as follows:
\[
\begin{aligned}
\tau_\mc{M} (x \sim_a y)&\quad=\quad &x \sim_a y;\\
\tau_\mc{M}(x: \varphi) &\quad= \quad& \mc{M},x \Vdash \varphi.
\end{aligned}
\]

A labelled sequent $\sigma_1,\ldots, \sigma_m \Imp \delta_1,\ldots, \delta_n$ is \emph{valid} if 
\[
 \forall \mc{M} \forall x_1\ldots \forall x_k[\tau_\mc{M}(\sigma_1) \land \ldots \land \tau_\mc{M}(\sigma_m)\to \tau_\mc{M}(\delta_1)\lor\ldots\lor\tau_\mc{M}(\delta_n)]
\]
is true, where $\mc{M}=(W,\{\sim_a\}_{a\in \msf{Ag}},V)$ is an epistemic model, and $x_1,\ldots,x_k$ are variables occurring in $\sigma_1,\ldots, \sigma_m \Imp \delta_1,\ldots, \delta_n$ ranging over $W$.

 A sequent rule $\mc{R}$
is \emph{valid} if the validity of all the premises implies the validity of the conclusion.

Given a labelled sequent calculus $G$ and a logic $\Lambda$, we say $G$ is a \emph{labelled sequent calculus for $\Lambda$} if for all $\varphi$, $G\vdash{} \Imp \varphi$ if and only if $\varphi\in \Lambda$.

\subsection{Labelled sequent calculus for $\msf{EL}$}

\begin{defn}
Labelled sequent calculus $G_\msf{EL}$ for $\msf{EL}$ consists of the following initial sequents and rules\footnote{It is mentioned without proof in \cite{hakli2007proof} that this is a labelled sequent calculus for $\msf{EL}$. This calculus is a multi-agent version of the labelled sequent calculus for $\msf{S}5$ proposed in \cite{negri2005proof}.}:

(1) Initial sequents:
\begin{align*}
x:p,\Gamma \Imp \Delta,x:p
&\quad&
x\sim_ay,\Gamma \Imp \Delta, x\sim_ay
\end{align*}

(2) Propositional rules:
\begin{align*}
&(\neg{\Imp})~\frac{\Gamma \Imp \Delta,x:\varphi}{x:\neg \varphi,\Gamma \Imp \Delta}
&
&({\Imp}\neg)~\frac{x:\varphi,\Gamma \Imp\Delta}{\Gamma \Imp \Delta,x:\neg \varphi}
\\
&(\land\! \Imp )~\frac{x:\varphi_1, x:\varphi_2,\Gamma \Imp \Delta}{x:\varphi_1\land \varphi_2,\Gamma \Imp \Delta}
&
&(\Imp\!\land)~\frac{\Gamma \Imp \Delta,x:\varphi_1 \quad \Gamma \Imp \Delta, x:\varphi_2}{\Gamma \Imp \Delta, x:\varphi_1\land \varphi_2}
\\
&(\to\Imp)~\frac{\Gamma \Imp \Delta, x:\varphi\quad x:\psi,\Gamma \Imp \Delta}{x:\varphi \to \psi, \Gamma \Imp \Delta}
&
&(\Imp\to)~\frac{x:\varphi ,\Gamma \Imp \Delta, x:\psi}{\Gamma \Imp \Delta, x:\varphi \to \psi}
\end{align*}

(3) Modal rules:
\begin{align*}
&(K_{a} \Imp )~\frac{y :\varphi, x: K_{a} \varphi,x\sim_ay, \Gamma \Imp \Delta}{x:K_{a} \varphi,x\sim_ay, \Gamma \Imp \Delta}
&
&
(\Imp K_{a})~\frac{x\sim_ay,\Gamma \Imp \Delta, y:\varphi}{\Gamma \Imp \Delta, x:K_{a} \varphi}
\end{align*}
where $y$ does not occur in the conclusion of  $(\Imp K_{a})$, .

(4) Relational rules:

\begin{align*}
&(\text{Ref}_a)~\frac{x\sim_ax,\Gamma \Imp \Delta}{\Gamma \Imp \Delta}
&
&(\text{Trans}_a)~\frac{x\sim_az,x\sim_ay,y\sim_az,\Gamma \Imp \Delta}{x\sim_ay,y\sim_az,\Gamma \Imp \Delta}
\\
&(\text{Sym}_a)~\frac{y\sim_ax,x\sim_ay,\Gamma \Imp \Delta}{x\sim_ay,\Gamma \Imp \Delta}
\end{align*}

\end{defn}

\begin{prop}\label{prop:iden}
For any $\mc{L}_\msf{EL}$-formula $\varphi$, $G_\msf{EL} \vdash x:\varphi, \Gamma \Imp \Delta,x:\varphi$.
\end{prop}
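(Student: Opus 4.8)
The plan is to prove the statement by induction on the structure of $\varphi$. A convenient feature of the formulation is that the contexts $\Gamma,\Delta$ are left arbitrary, so the statement is already strong enough to serve as its own induction hypothesis; in particular every appeal to the induction hypothesis below will be to an instance with suitably chosen contexts, and no structural rule (weakening or contraction) is invoked. This is worth arranging, since the admissibility of those rules is only established later, in Section~4.

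In the base case $\varphi=p$ is atomic, and $x:p,\Gamma \Imp \Delta,x:p$ is literally an initial sequent. For the propositional connectives one reads the corresponding left and right rules bottom-up to peel off the principal formula on each side, reducing to instances of the induction hypothesis for the immediate subformulas. For example, for $\varphi=\neg\psi$ one applies $(\Imp\neg)$ to the occurrence of $x:\neg\psi$ on the right and then $(\neg\Imp)$ to the one on the left, leaving the premise $x:\psi,\Gamma \Imp \Delta,x:\psi$, which is available by the induction hypothesis for $\psi$. The cases $\land$ and $\to$ are analogous: applying the left rule and then the right rule (for $\to$, $(\to\Imp)$ after $(\Imp\to)$) produces two leaves, each an instance of the induction hypothesis for one of the subformulas $\psi_1,\psi_2$ over an enlarged context.

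The main case, and the only one requiring a little care, is $\varphi=K_a\psi$. Starting from $x:K_a\psi,\Gamma \Imp \Delta,x:K_a\psi$, first apply $(\Imp K_a)$ to the right-hand occurrence, choosing an eigenvariable $y$ that does not occur in $x,\Gamma,\Delta$; this is permitted by the freshness side condition and yields the premise $x\sim_a y,x:K_a\psi,\Gamma \Imp \Delta,y:\psi$. This step introduces exactly the relational atom $x\sim_a y$ needed to fire the left rule: applying $(K_a\Imp)$ to $x:K_a\psi$ along $x\sim_a y$ reduces the goal to $y:\psi,x:K_a\psi,x\sim_a y,\Gamma \Imp \Delta,y:\psi$, which is an instance of the induction hypothesis for $\psi$ at the label $y$.

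I expect no genuine obstacle: all propositional cases are routine rule applications, and the $K_a$ case hinges only on the observation that the right rule $(\Imp K_a)$ supplies the very relational atom consumed by the left rule $(K_a\Imp)$, with the eigenvariable condition met by choosing $y$ fresh. The one point worth double-checking is precisely this bookkeeping of the eigenvariable and of the contexts, so as to ensure that every leaf is an exact instance of the induction hypothesis rather than one that would require weakening.
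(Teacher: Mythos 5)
Your proof is correct and follows exactly the approach the paper intends: the paper's own proof is just the one-line remark ``It can be proved by induction on $\varphi$,'' and your argument is that induction carried out in full, with the contexts kept arbitrary so that no weakening or contraction is needed and with the eigenvariable condition in the $K_a$ case handled properly. Nothing to add or correct.
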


\begin{proof}
It can be proved by induction on $\varphi$.
\end{proof}

By proofs similar to those in \cite{negri2005proof}, we have Theorems \ref{thm:adm} and \ref{thm:ElDeci}.

\begin{thm}\label{thm:adm}
Structural rules $(w\Imp)$, $(\Imp w)$, $(c\Imp)$, $(\Imp c)$, $(c_R\Imp)$ and $(\Imp c_R)$ are height-preserving admissible in $G_\msf{EL}$. The cut rule $(Cut)$ is admissible in $G_\msf{EL}$.

\begin{align*}
&(w\Imp)~\frac{\Gamma \Imp \Delta}{x:\varphi, \Gamma \Imp \Delta}&
\quad
&(\Imp w)~\frac{\Gamma \Imp \Delta}{\Gamma \Imp \Delta, x:\varphi}
\\
&(c\Imp)~\frac{x:\varphi,x:\varphi,\Gamma \Imp \Delta}{x:\varphi,\Gamma \Imp \Delta}
&
&(\Imp c)~\frac{\Gamma \Imp \Delta,x:\varphi,x:\varphi}{\Gamma \Imp \Delta, x:\varphi}
\\
&(c_R\Imp)~\frac{x\sim_ay,x\sim_ay,\Gamma \Imp \Delta}{x\sim_ay,\Gamma \Imp \Delta}
&
&(\Imp c_R)~\frac{\Gamma \Imp \Delta,x\sim_ay,x\sim_ay}{\Gamma \Imp \Delta, x\sim_ay}
\\
&(Cut)~\frac{\Gamma \Imp \Delta, x:\varphi \quad x:\varphi ,\Gamma' \Imp \Delta'}{\Gamma ,\Gamma'\Imp \Delta, \Delta'}
\end{align*}
\end{thm}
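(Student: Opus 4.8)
The plan is to follow the standard route for labelled calculi in the style of Negri, establishing the structural rules in a fixed order so that each step may invoke the previous ones. First I would prove height-preserving admissibility of a variable-substitution rule that uniformly replaces one label by another throughout a sequent. This goes by induction on derivation height; the only delicate case is $(\Imp K_a)$, where substituting for or onto the eigenvariable $y$ is avoided by first renaming $y$ to a fresh label lower in the derivation (itself a height-preserving substitution). With substitution available, height-preserving admissibility of $(w\Imp)$ and $(\Imp w)$ follows by induction on height: initial sequents stay initial after the context is enlarged, and each rule is reapplied after pushing the weakening into its premises, renaming the eigenvariable of $(\Imp K_a)$ whenever the weakened formula threatens its freshness.

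Next I would establish height-preserving invertibility of every logical rule by induction on height, and then use it to prove the contraction rules $(c\Imp)$, $(\Imp c)$, $(c_R\Imp)$ and $(\Imp c_R)$ by a simultaneous induction on height for all four at once. When neither contracted copy is principal in the last inference, the two copies are handed to the premises and the induction hypothesis applies. When a contracted labelled formula is principal, I would invert the rule on the surviving copy and contract the resulting active formulas at smaller height. The copy-retaining modal rule $(K_a\Imp)$ and the relational rules $(\text{Ref}_a)$, $(\text{Trans}_a)$, $(\text{Sym}_a)$, which reproduce their principal relational atoms, are precisely why relational-atom contraction $(c_R\Imp)$ must be included; they are absorbed by the same induction.

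Finally, for $(Cut)$ I would argue by the usual double induction: a primary induction on the weight of the cut formula and a secondary induction on the sum of the heights of the two premises. If the cut formula is not principal in one premise, the cut permutes upward, lowering the cut height; the relational rules cause no difficulty here, since the cut formula is always a labelled formula $x:\varphi$ while those rules act only on relational atoms. If the cut formula is a compound principal formula on both sides, I replace the cut by one or two cuts on its immediate subformulas, which have smaller weight, cleaning up duplicated context with the contraction rules already shown admissible.

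The step I expect to be the main obstacle is the principal modal case: a cut on $x:K_a\varphi$ introduced by $(\Imp K_a)$ on the left and consumed by $(K_a\Imp)$ on the right. Here the left premise derives $x\sim_a y, \Gamma \Imp \Delta, y:\varphi$ for an eigenvariable $y$, while the right side supplies some relational atom $x\sim_a z$. I would instantiate the eigenvariable by the height-preserving substitution $[z/y]$ to obtain $x\sim_a z, \Gamma \Imp \Delta, z:\varphi$, then perform a cut on $x:K_a\varphi$ at smaller cut height (exploiting that $(K_a\Imp)$ keeps a copy of its principal formula), followed by a cut on the strictly smaller formula $z:\varphi$, finishing with contraction. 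Making this interaction between the eigenvariable condition, the copy-retaining rule $(K_a\Imp)$, and the reflexivity--transitivity--symmetry rules close correctly is the crux; the remaining cases are routine bookkeeping paralleling the corresponding argument for $\msf{S5}$ in \cite{negri2005proof}.
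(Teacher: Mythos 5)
Your proposal is correct and takes essentially the same route as the paper, which proves this theorem simply by appeal to the arguments of \cite{negri2005proof}: height-preserving substitution, then weakening, invertibility, contraction, and finally cut by double induction on cut-formula weight and cut-height, with the principal $K_a$ case handled exactly as you describe via eigenvariable substitution and the copy-retaining rule $(K_a\Imp)$. Nothing in your sketch deviates from, or falls short of, that standard argument adapted to the multi-agent setting.
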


\begin{thm}\label{thm:ElDeci}
$G_\msf{EL}$ allows terminating proof search.
\end{thm}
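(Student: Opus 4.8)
The plan is to adapt the termination argument of Negri \cite{negri2005proof} for $\msf{S5}$ to the present multi-agent calculus, exhibiting a root-first proof-search procedure that always halts. The procedure is \emph{finitely branching} (each sequent has finitely many applicable rules, each with finitely many premises), so by K\"onig's lemma it suffices to prove that every branch of the search tree has finite length. Throughout, the admissibility of weakening and contraction (Theorem \ref{thm:adm}) justifies restricting attention to \emph{non-redundant} applications, i.e. applications whose premise adds at least one relational atom or labelled formula not already present in the conclusion; a redundant application can always be omitted without affecting derivability.

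The first ingredient is the \emph{subformula property}: reading every rule of $\GE$ from conclusion to premises, the formula-part $\psi$ of any labelled formula $y:\psi$ in a backward derivation is a subformula of some formula in the end-sequent. Hence along any branch the formula-parts of labelled formulas all lie in the finite set $\msf{Sub}$ of subformulas of the end-sequent. Granting for the moment that only finitely many labels occur on a branch, the remaining rules can fire only finitely often: over a fixed finite label set there are only finitely many relational atoms $x\sim_a y$, so $(\text{Ref}_a)$, $(\text{Trans}_a)$, $(\text{Sym}_a)$ introduce no new atom after finitely many steps; and since the active formulas of the propositional rules and of $(K_a\Imp)$ are drawn from the finite collection of pairs consisting of a label and a subformula in $\msf{Sub}$, and, by non-redundancy, each such pair is produced at most once, these rules too can fire only finitely often.

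The crux is therefore to bound the number of \emph{fresh} labels, since $(\Imp K_a)$ is the only rule that creates them. Here naive search genuinely loops: a succedent formula such as $x:K_a(\neg K_a p)$ can, through $(K_a\Imp)$, negation, and the equivalence closure produced by the relational rules, regenerate a succedent occurrence of $K_a p$ at each newly created world, prompting a further application of $(\Imp K_a)$ ad infinitum. The $\msf{S5}$ structure is exactly what tames this: the relational rules close each $\sim_a$ into an equivalence relation, so all worlds created for a given agent collapse into one equivalence class, and within a class a world is determined, for the purpose of the rules, only by its finite formula-content (a subset of $\msf{Sub}$). I would therefore restrict $(\Imp K_a)$ by a \emph{loop check}: do not introduce a fresh $a$-successor of $x$ with goal $y:\varphi$ when the current class of $x$ already contains a world whose relevant formula-content makes a new world indistinguishable from an existing one. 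Since $\msf{Sub}$ is finite, each equivalence class then admits only boundedly many pairwise-distinguishable worlds, which bounds the total number of labels on the branch.

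The main obstacle is to make this loop check precise and to prove it sound, i.e. that blocking redundant world-creation preserves derivability while guaranteeing termination. Concretely I would define a well-founded measure on sequents, combining the number of succedent $K_a$-formulas still awaiting their single $(\Imp K_a)$ application with the finitely many unprocessed label--subformula pairs, and show it strictly decreases under each permitted, non-redundant rule application, so that no branch is infinite. Together with finite branching and K\"onig's lemma this yields a finite search tree, and hence terminating proof search; a branch that does not close moreover exhibits a saturated sequent from which a finite epistemic countermodel can be read off.
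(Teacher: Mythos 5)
Your overall architecture --- root-first search, finite branching plus K\"onig's lemma, the subformula property, non-redundant applications justified by Theorem \ref{thm:adm}, and the reduction of everything to bounding the fresh labels created by $(\Imp K_a)$ --- is the same skeleton as the argument the paper appeals to (Negri's termination proof for $\msf{S5}$ \cite{negri2005proof}, sketched for the extended calculus in Section 5), and you correctly locate the one real difficulty. But exactly there the proposal stops being a proof: the loop check that carries the entire label bound is never defined (what is the ``relevant formula-content'' of a world, given that this content keeps growing as the search proceeds, so that a block imposed at one stage can be invalidated later?), and its soundness --- that every derivable sequent still has a derivation respecting the block --- is explicitly postponed (``the main obstacle \ldots'', ``I would define \ldots''). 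A terminating but possibly incomplete search discipline establishes nothing: forbidding fresh labels altogether also terminates. So the crux of the theorem is missing.

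The repair you sketch also does not work as stated. The first component of your measure, the number of succedent $K_a$-formulas still awaiting their single $(\Imp K_a)$ application, is not monotone: in the very loop at issue, an antecedent formula $x:K_a\neg K_a p$ regenerates at each new world $y$ of the class a fresh succedent formula $y:K_a p$, via $(K_a\Imp)$ followed by $(\neg\Imp)$ (incidentally, this is why the looping formula must sit in the antecedent; your succedent $x:K_a\neg K_ap$ is harmless, as it merely moves $K_ap$ to the antecedent). The second component, unprocessed label--subformula pairs, likewise grows whenever a rule writes formulas at a fresh label; both components are finite in number only \emph{after} the label set has been bounded, which is circular since the measure was supposed to deliver that bound. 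Moreover, in the multi-agent setting bounding each equivalence class does not bound the number of classes, since every fresh world opens new classes for the other agents. The paper's route avoids blocking entirely: restrict to \emph{minimal} derivations (no shortening possible), prove that $(K_a\Imp)$ permutes down --- in particular below $(\Imp K_a)$ when its principal relational atom is not active there --- conclude that $(K_a\Imp)$ and $(\Imp K_a)$ are never applied twice to the same pair of principal formulas on a branch, and then bound the number of iterated applications of $(\Imp K_a)$ with principal formulas $x_i:K_a\varphi$ along a chain $x\sim_a x_1, x_1\sim_a x_2,\ldots$ by $n(K_a)$, the number of $K_a$-occurrences in the negative part of the endsequent. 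If you want to keep your blocking-style argument instead, you must define the block dynamically and prove, via a label-substitution lemma, that any derivation can be transformed into one respecting the block; that proof is precisely the content your sketch defers.
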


\begin{exam}
A derivation for axiom $(5)$ in $G_\msf{EL}$ is as follows:
\[
\AxiomC{$z:\varphi, y: K_a\varphi, y\sim_a z, x \sim_a z, y\sim_a x, x \sim_a y \Imp z: \varphi$}
\ri{\f$( K_a\Imp)$}
\UnaryInfC{$ y: K_a\varphi, y\sim_a z, x \sim_a z, y\sim_a x, x \sim_a y \Imp z: \varphi$}
\ri{\f$( Tran_a)$}
\UnaryInfC{$  y: K_a\varphi, x \sim_a z, y\sim_a x, x \sim_a y \Imp z: \varphi$}
\ri{\f$( Sym_a)$}
\UnaryInfC{$  y: K_a\varphi, x \sim_a z, x \sim_a y \Imp z: \varphi$}
\ri{\f$( \Imp \neg)$}
\UnaryInfC{$  x \sim_a z, x \sim_a y \Imp y:\neg K_a\varphi,z: \varphi$}
\ri{\f$( \Imp K_a)$}
\UnaryInfC{$  x \sim_a y \Imp y:\neg K_a\varphi,x: K_a\varphi$}
\ri{\f$( \Imp K_a)$}
\UnaryInfC{$  \Imp x:K_a\neg K_a\varphi,x: K_a\varphi$}
\ri{\f$(\neg \Imp)$}
\UnaryInfC{$x:\neg K_a\varphi  \Imp x:K_a\neg K_a\varphi$}
\ri{\f$(\Imp \to)$}
\UnaryInfC{$\Imp x:\neg K_a\varphi \to K_a\neg K_a\varphi$}
\DisplayProof
\]
\end{exam}

\section{Labelled sequent calculus for $\msf{PAL}$}

In this section, we introduce a labelled sequent calculus for $\msf{PAL}$. The Hilbert-style axiomatization for $\msf{PAL}$ is the extension of that for $\msf{EL}$ with reduction axioms. Can we obtain a labelled sequent calculus for $\msf{PAL}$ by adding some rules adapted from reduction axioms to the labelled sequent calculus $G_\msf{EL}$ for $\msf{EL}$? The answer is yes and this is what we do.

Take reduction axiom (R1)$\p{\varphi}p\leftrightarrow (\varphi\to p)$ as an example. The equivalence symbol in the axiom means that $\p{\varphi}p$ and $\varphi\to p$ are equivalent in $\msf{PAL}$. Therefore, the most direct sequent rules for (R1) are 
\[
\frac{x:\varphi\to p,\Gamma \Imp \Delta}{x:\p{\varphi}p, \Gamma \Imp \Delta} \text{~~~and~~~} \frac{\Gamma \Imp \Delta, x:\varphi\to p}{\Gamma \Imp \Delta,x: \p {\varphi}p}
\]
The rule on the left is sound because of $\p{\varphi} p\to (\varphi\to p)$ and the rule on the right is sound because of $ (\varphi\to p) \to \p{\varphi}p $. These rules can be written in a more neat way if we see $\varphi\to p,\Gamma \Imp \Delta$ as the conclusion of an application of $(\to \Imp)$ and $\Gamma \Imp \Delta, \varphi\to p$ as the conclusion of an application of $(\Imp \to)$. Applying the reverse of $(\to \Imp)$ and $(\Imp \to)$ to their premises, we have the rules for (R1) that will be added to $G_\msf{EL}$:
\[
\frac{\Gamma \Imp \Delta, x:\varphi \quad x:p,\Gamma \Imp \Delta}{x:\p{\varphi}p, \Gamma \Imp \Delta} \text{~~~and~~~} \frac{x:\varphi, \Gamma \Imp \Delta,  p}{\Gamma \Imp \Delta, x:\p {\varphi}p}
\]
In a similar way, we can have sequent rules for other reduction axioms. Therefore, we have:
\begin{defn}\label{defn:calculus}
Labelled sequent calculus $G_\msf{PAL}$ for $\msf{PAL}$ is $G_\msf{EL}$ plus the following sequent rules:
\begin{align*}
&\small{(R1\!\Imp)~\frac{\Gamma \Imp \Delta, x:\varphi \quad x:p ,\Gamma \Imp \Delta}{ x:\p{\varphi} p,\Gamma \Imp \Delta}}
&
\quad
&\small{(\Imp\!R1)~\frac{x:\varphi,\Gamma \Imp \Delta, x:p}{\Gamma \Imp \Delta, x:\p{\varphi}p}}
\\
&(R2\Imp)~\frac{\Gamma \Imp \Delta, x: \varphi\quad x:\neg\p{\varphi}\psi,\Gamma \Imp \Delta}{x: \p{\varphi}\neg \psi ,\Gamma \Imp \Delta}&
\quad
&(\Imp R2)~\frac{x:\varphi, \Gamma \Imp \Delta, x:\neg\p{\varphi}\psi}{\Gamma \Imp \Delta, x:\p{\varphi}\neg \psi}
\\
&(R3 \Imp)~\frac{x:\p{\varphi}\psi_1,x:\p{\varphi}\psi_2,\Gamma \Imp \Delta}{x:\p{\varphi}(\psi_1\land \psi_2),\Gamma \Imp \Delta}&
\quad
&(\Imp R3)~\frac{\Gamma \Imp \Delta, x:\p{\varphi}\psi_1  \quad \Gamma \Imp \Delta, x:\p{\varphi}\psi_2}{\Gamma \Imp \Delta, x:\p{\varphi}(\psi_1\land \psi_2)}
\\
&(R4\Imp)~\frac{\Gamma \Imp \Delta, x:\p{\varphi} \psi_1 \quad x:\p{\varphi}\psi_2,\Gamma \Imp \Delta}{x:\p{\varphi}(\psi_1\to \psi_2),\Gamma \Imp \Delta}&
\quad
&(\Imp R4)~\frac{x:\p{\varphi}\psi_1,\Gamma \Imp \Delta,x:\p{\varphi}\psi_2}{\Gamma \Imp \Delta, \p{\varphi}(\psi_1\to\psi_2)}
\\
&(R5\Imp)~\frac{\Gamma \Imp \Delta, x: \varphi\quad K_a\p{\varphi}\psi,\Gamma \Imp \Delta}{x: \p{\varphi}K_a\psi ,\Gamma \Imp \Delta}&
\quad
&(\Imp R5)~\frac{x:\varphi, \Gamma \Imp \Delta, x:K_a\p{\varphi}\psi}{\Gamma \Imp \Delta, x:\p{\varphi}K_a\psi}
\\
&(R6\Imp)~\frac{x:\p{\varphi\wedge\p{\varphi}\psi}\chi,\Gamma \Imp \Delta}{x:\p{\varphi}\p{\psi}\chi,\Gamma \Imp \Delta}&
\quad
&(\Imp R6)\frac{\Gamma \Imp \Delta,x:\p{\varphi\wedge\p{\varphi}\psi}\chi}{\Gamma \Imp \Delta,x:\p{\varphi}\p{\psi}\chi} 
\end{align*}
We call these sequent rules the \emph{reduction rules}.
\end{defn}

There are six pairs of reduction rules in $G_\msf{PAL}$, each pair dealing with a kind of announcement formulas. Each left rule introduces a formula on the left of $\Imp$, and each right rule introduces one on the right of $\Imp$. Another desirable property for sequent rules is that the complexity of each premise should be less than that of the conclusion.
If we define the complexity of a sequent to be the the sum of relational atoms and labelled formulas occurring in it, then the definition of formula complexity that counts the number of connectives will make (R5) and (R6) fail to satisfy the complexity increasing property. The following definition for formula complexity can solve this problem\footnote{This is Definition 7.21 in \cite{DIT2007}.}:

\begin{defn}\label{defn:compl}
Let $\varphi$ be an $\mc{L}_\msf{PAL}$ formula, The \emph{complexity} $c(\varphi)$ of $\varphi$ is defined as follows:
\begin{center}
\begin{tabular}{ll}
$c(p)=1$ & $c(\varphi \to \psi)= 1+ \max{\{c(\varphi),c(\psi)\}}$    \\
$c(\neg \varphi)=1+c(\varphi)$ & $c(K_a \varphi) = 1 +c(\varphi)$\\
$c(\varphi \land\psi)= 1+ \max{\{c(\varphi),c(\psi)\}}$ &$c(\p{\varphi}\psi)= (4+c(\varphi) )\cdot c(\psi)$.  \\
\end{tabular}
\end{center}
\end{defn}

Then we have the following lemma\footnote{This is Lemma 7.22 in \cite{DIT2007}.}:
\begin{lem}
For all $\mc{L}_\msf{PAL}$-formulas $\varphi,\psi$ and $\chi$:

\begin{minipage}[t]{0.45\textwidth}
\begin{enumerate}
\item[(1)] $c(\p{\varphi} p)>  c(\varphi \to p)$;
\item[(2)] $c(\p{\varphi}\neg \psi )>c(\varphi\to \neg\p{\varphi} \psi)$;
\item[(3)] $c(\p{\varphi}(\psi \land \chi)) > c(\p{\varphi}\psi \land \p{\varphi}\chi)$;
\end{enumerate}
\end{minipage}
\begin{minipage}[t]{0.45\textwidth}
\begin{enumerate}
\item[(4)] $c(\p{\varphi}K_a \psi)> c(\varphi \to K_a \p{\varphi} \psi)$;
\item[(5)] $c(\p{\varphi}\p{\psi}\chi)>c (\p{\varphi \land \p{\varphi}\psi}\chi)$.
\end{enumerate}
\end{minipage}

\end{lem}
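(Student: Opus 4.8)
The plan is to prove all five inequalities by direct computation from Definition~\ref{defn:compl}, after isolating a single auxiliary observation. First I would record, by a trivial induction on the structure of formulas, that $c(\theta)\geq 1$ for every $\mc{L}_\msf{PAL}$-formula $\theta$; consequently $4+c(\theta)\geq 5$, and for any announcement $c(\p{\varphi}\psi)=(4+c(\varphi))\cdot c(\psi)\geq 4+c(\varphi)>c(\varphi)$. This last inequality is exactly what is needed to resolve every $\max$ appearing on the right-hand sides: whenever a term of the form $c(\p{\varphi}\theta)$ (or $1+c(\p{\varphi}\theta)$) is compared against $c(\varphi)$ inside a $\max$, the announcement term dominates, so the $\max$ collapses to it and each side reduces to a closed polynomial expression in the complexities of the constituents.

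For clauses (1)--(4) the comparison is then immediate. Writing $a=c(\varphi)$ and $b=c(\psi)$: in (1) the left side unfolds to $4+a$ and the right side to $1+\max\{a,1\}=1+a$, a gap of $3$; in (2) the left side is $(4+a)(1+b)$ and, after collapsing the $\max$, the right side is $2+(4+a)b$, leaving $2+a$; in (3) both sides carry the common factor $\max\{c(\psi),c(\chi)\}$ and the difference is $3+a$; and in (4) the left side is $(4+a)(1+b)$ against a right side of $2+(4+a)b$, again with gap $2+a$. In every case the difference is a strictly positive constant (possibly plus $a$), so the strict inequality holds.

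The main work, and the reason the multiplicative clause in Definition~\ref{defn:compl} takes its particular form, lies in clause (5). Here I would first compute the complexity of the conjunction occurring in the conclusion of $(R6\Imp)$: since $c(\p{\varphi}\psi)=(4+a)b>a$, the $\max$ collapses and $c(\varphi\land\p{\varphi}\psi)=1+(4+a)b$. Setting $d=c(\chi)$, the two sides become $c(\p{\varphi}\p{\psi}\chi)=(4+a)(4+b)\,d$ and $c(\p{\varphi\land\p{\varphi}\psi}\chi)=(5+(4+a)b)\,d$. Factoring out the common $d\geq 1$, the claim reduces to the polynomial inequality $(4+a)(4+b)>5+(4+a)b$, whose left-minus-right equals $4(4+a)=16+4a>5$.

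The point I would stress is that the margin in (5) is positive precisely because the additive constant $4$ attached to announcements strictly exceeds the $+1$ contributed by forming the extra conjunction $\varphi\land\p{\varphi}\psi$; the naive connective-counting complexity, which would tie $c(\p{\varphi}\psi)$ to $c(\varphi)+c(\psi)$, does not survive this step, and this is exactly what the preceding discussion in the text anticipates. Thus the only genuine obstacle is the bookkeeping of the nested announcement in (5), and it is dispatched by the single inequality above; clauses (1)--(4) are routine once the $\max$ terms are resolved via the auxiliary observation.
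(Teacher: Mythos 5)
Your proof is correct, and it is essentially the same argument as the one behind the paper's lemma: the paper itself gives no proof (it defers to Lemma 7.22 of \cite{DIT2007}), and your route---first establishing $c(\theta)\geq 1$ and hence $c(\p{\varphi}\psi)\geq 4+c(\varphi)>c(\varphi)$ to collapse every $\max$, then comparing the resulting polynomial expressions clause by clause---is exactly the standard computation that source carries out. One cosmetic slip in clause (5): the literal left-minus-right of $(4+a)(4+b)>5+(4+a)b$ is $11+4a$, not $16+4a$; what you actually computed is $(4+a)(4+b)-(4+a)b=4(4+a)=16+4a>5$, which gives the same conclusion, so nothing is broken.
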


\begin{lem}
For any $\mc{L}_\msf{PAL}$-formula $\varphi$, $G_\msf{PAL}\vdash x:\varphi,\Gamma \Imp \Delta, x:\varphi$.

\end{lem}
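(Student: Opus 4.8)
The plan is to prove the statement by strong induction on the complexity $c(\varphi)$ of Definition \ref{defn:compl}, keeping the context $\Gamma,\Delta$ and the label $x$ schematic so that the induction hypothesis can be invoked for arbitrary contexts. Concretely, I assume that $G_\msf{PAL}\vdash y:\alpha,\Gamma'\Imp\Delta',y:\alpha$ holds for every label $y$, all contexts $\Gamma',\Delta'$, and every formula $\alpha$ with $c(\alpha)<c(\varphi)$, and I establish the same for $\varphi$.

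If $\varphi=p$ is a propositional variable, then $x:p,\Gamma\Imp\Delta,x:p$ is an initial sequent. If the outermost connective of $\varphi$ is one of $\neg,\land,\to,K_a$, I proceed exactly as in the proof of Proposition \ref{prop:iden}: I apply the matching left and right $G_\msf{EL}$-rule to the two occurrences of $\varphi$ (in the $K_a$ case using $(\Imp K_a)$ with a fresh label followed by $(K_a\Imp)$), which reduces the goal to identity sequents on the immediate subformulas of $\varphi$. Since every connective strictly increases $c$, each such subformula has complexity strictly below $c(\varphi)$, and the induction hypothesis closes these branches.

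The interesting cases are the announcement formulas $\varphi=\p{\varphi_0}\psi$, which I split according to the outermost shape of $\psi$; the six possibilities $\psi\in\{p,\neg\chi,\chi_1\land\chi_2,\chi_1\to\chi_2,K_a\chi,\p{\psi_0}\chi\}$ correspond exactly to the six pairs of reduction rules. In each subcase the recipe is identical: apply the right reduction rule $(\Imp Ri)$ to the right occurrence of $\varphi$ and the left reduction rule $(Ri\Imp)$ to the left occurrence. The two principal occurrences sit on opposite sides of $\Imp$ and the reduction rules carry no eigenvariable condition, so applying one rule leaves the other occurrence untouched and both can always be fired. After both applications every leaf is an identity sequent $\ldots,x:\alpha\Imp\ldots,x:\alpha$ whose repeated formula $\alpha$ is an active formula of the reduction rules. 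For instance, in the case $\psi=p$ the leaves repeat $x:\varphi_0$ (closed by the induction hypothesis) and $x:p$ (an initial sequent); in the case $\psi=K_a\chi$ they repeat $x:\varphi_0$ and $x:K_a\p{\varphi_0}\chi$; and in the case $\psi=\p{\psi_0}\chi$ the single leaf repeats $x:\p{\varphi_0\land\p{\varphi_0}\psi_0}\chi$.

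The only thing left to verify, and the main obstacle, is that each such active formula $\alpha$ satisfies $c(\alpha)<c(\varphi)$, so that the induction hypothesis genuinely applies. This is where the weighted complexity of Definition \ref{defn:compl} is essential: with the naive count of connectives the cases $(R5)$ and $(R6)$ would fail to decrease. For the five shapes $p,\neg\chi,\chi_1\land\chi_2,K_a\chi,\p{\psi_0}\chi$ this follows from the preceding lemma: parts (1)--(5) give $c(R)<c(\varphi)$ for the right-hand side $R$ of the matching reduction axiom, and each active formula $\alpha$ is either $R$ itself (shape $\p{\psi_0}\chi$) or an immediate constituent of $R$, whence $c(\alpha)\le c(R)<c(\varphi)$. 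The remaining shape $\chi_1\to\chi_2$ is not listed in that lemma and is settled by the direct computation $c(\p{\varphi_0}(\chi_1\to\chi_2))=(4+c(\varphi_0))(1+\max\{c(\chi_1),c(\chi_2)\})>(4+c(\varphi_0))\,c(\chi_i)=c(\p{\varphi_0}\chi_i)$ for $i=1,2$. Hence every leaf is either an initial sequent or falls under the induction hypothesis, completing the induction.
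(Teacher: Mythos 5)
Your proof is correct, and the derivations it constructs are exactly the ones the paper builds: in every case you fire the matching left/right rule pair on the two occurrences of $\varphi$ (the two reduction rules $(Ri\Imp)$, $(\Imp Ri)$ in the announcement cases) and recurse on the resulting identity sequents. Where you genuinely diverge from the paper is in how the recursion is justified. The paper runs a structural induction on $\varphi$ with a nested subinduction on $\psi$ in the case $\varphi=\p{\phi}\psi$; this is needed because the active formulas produced by the reduction rules (e.g.\ $x:K_a\p{\phi}\psi$ from $(R5)$, or $x:\p{\phi\land\p{\phi}\chi}\psi$ from $(R6)$) are not subformulas of the principal formula, and the well-foundedness of that nested scheme is left largely implicit. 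You instead use a single strong induction on the weighted complexity $c$ of Definition~\ref{defn:compl}, quantifying the hypothesis over labels and contexts (which you correctly note is needed for the $K_a$ case), and you discharge all six announcement shapes uniformly by checking that every active formula has strictly smaller $c$-value. This reuses the paper's complexity lemma -- which the paper introduces only for the termination argument -- and in doing so you caught a real gap in that lemma as stated: it omits the $(R4)$ case $c(\p{\varphi}(\psi_1\to\psi_2))>c(\p{\varphi}\psi_i)$, which you patch with a direct computation. The trade-off: your measure-based induction is flatter, fully explicit about well-foundedness, and makes the "non-subformula" active formulas unproblematic, at the price of depending on the particular weighting in $c$; the paper's nested structural induction needs no measure at all, but its bookkeeping (why the leaf $x:K_a\p{\phi}\psi$ in the $(R5)$ case eventually falls under the subinduction hypothesis) is the part your version makes rigorous.
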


\begin{proof}
\begin{proof}
We prove this by induction on the structure of $\varphi$ with a subinduction on $\psi$ of the inductive case where $\varphi$ equals $[\varphi]\psi$. All inductive case not involving announcement are the same as in the proof of Proposition \ref{prop:iden}.  
When $\varphi$ involves a public announcement operator, there are 6 subcases. We show two representative cases. 

When $\varphi=\p{\phi}K_a\psi$, the derivation is as follows:
\[
\AxiomC{$x:\phi, \Gamma\Imp\Delta, x: K_a\p{\phi}\psi, x:\phi$}
\AxiomC{$x:K_a[\phi]\psi, x:\phi, \Gamma\Imp\Delta,x: K_a\p{\phi}\psi$}
\ri{\f$(R5\Imp)$}
\BinaryInfC{$x:\phi, x:\p{\phi}K_a\psi, \Gamma\Imp\Delta, x: K_a\p{\phi}\psi$}
\ri{\f$(\Imp R5)$}
\UnaryInfC{$x:\p{\phi}K_a\psi, \Gamma\Imp\Delta, x:\p{\phi}K_a\psi$}
\DisplayProof
\]

When $\varphi=\p{\phi}[\chi]\psi$, the derivation is as follows:
\[
\AxiomC{$x:[\phi\land[\chi]\psi], \Gamma\Imp\Delta, x:[\phi\land[\phi]\chi]\psi$}
\ri{\f$(R6\Imp)$}
\UnaryInfC{$x: \p{\phi}[\chi]\psi,\Gamma\Imp\Delta, x:[\phi\land[\phi]\chi]\psi$}
\ri{\f$(\Imp R6)$}
\UnaryInfC{$x: \p{\phi}[\chi]\psi,\Gamma\Imp\Delta, x:\p{\phi}[\chi]\psi$}
\DisplayProof
\]
Other cases can be proved analogously. 
\end{proof}

\end{proof}

\begin{exam}\label{exa:0414}
Now we show that $(R5)\p{\varphi}K_a \psi \leftrightarrow (\varphi \to K_a \p{\varphi}\psi)$ is derivable in $G_\msf{PAL}$. A derivation for $\p{\varphi}K_a \psi \to (\varphi \to K_a \p{\varphi}\psi)$ in $G_\msf{PAL}$ is as follows:
\[
\AxiomC{$x\sim_a y , x: \varphi \Imp y: \p{\varphi}\psi,x:\varphi$}
\AxiomC{$y: \p{\varphi} \psi, x:K_a \p{\varphi} \psi, x\sim_a y , x: \varphi \Imp y: \p{\varphi}\psi$}
\ri{\f$(K_a\! \Imp)$}
\UnaryInfC{$x:K_a \p{\varphi} \psi, x\sim_a y , x: \varphi \Imp y: \p{\varphi}\psi$}
\ri{\f$(R5\Imp)$}
\BinaryInfC{$x\sim_a y ,x:\p{\varphi}K_a \psi,  x: \varphi \Imp y: \p{\varphi}\psi$}
\ri{\f$(\Imp K_a)$}
\UnaryInfC{$x:\p{\varphi}K_a \psi,  x: \varphi \Imp x: K_a \p{\varphi}\psi$}
\ri{\f$(\Imp \to)$}
\UnaryInfC{$x:\p{\varphi}K_a \psi \Imp x: \varphi \to K_a \p{\varphi}\psi$}
\ri{\f$(\Imp \to)$}
\UnaryInfC{$\Imp x:\p{\varphi}K_a \psi \to (\varphi \to K_a \p{\varphi}\psi)$}
\DisplayProof
\]
A derivation for $(\varphi \to K_a \p{\varphi}\psi) \to \p{\varphi}K_a \psi$ in $G_\msf{PAL}$ is as follows:
\[
\AxiomC{$x:\varphi \Imp x:K_a  \p{\varphi}\psi,x:\varphi$}
\AxiomC{$ x:K_a  \p{\varphi}\psi, x:\varphi \Imp x:K_a  \p{\varphi}\psi $}
\ri{\f$(\to \Imp)$}
\BinaryInfC{$x:\varphi , x:  \varphi\to  K_a \p{\varphi}\psi \Imp x:K_a  \p{\varphi}\psi$}
\ri{\f$(\Imp R5)$}
\UnaryInfC{$x:  \varphi\to  K_a \p{\varphi}\psi \Imp x: \p{\varphi}K_a \psi$}
\ri{\f$(\Imp \to)$}
\UnaryInfC{$\Imp x: (\varphi \to K_a \p{\varphi}\psi) \to \p{\varphi}K_a \psi)$}
\DisplayProof
\]
\end{exam}

\begin{example}
$[p\land\neg K_ap]\neg K_ap$ is not derivable in $G_{\msf{PAL}}$.
\[
\small
\AxiomC{$\mc{D}_0$}
\UnaryInfC{$x\sim_ay,x:K_a[p\land\neg K_ap]p,x:p\Imp y:p$}
\ri{\f$(\Imp K_a)$}
\UnaryInfC{$x:K_a[p\land\neg K_ap]p,x:p\Imp x:K_ap$}
\AxiomC{$x:p\Imp x:p\land\neg K_ap,x:K_ap$}
\BinaryInfC{$x:[p\land\neg K_ap]K_ap, x:p\Imp x:K_ap$}
\ri{\f$(\neg\Imp)$}
\UnaryInfC{$x:[p\land\neg K_ap]K_ap, x:p, x:\neg K_ap\Imp$}
\ri{\f$(\land\Imp,\Imp\neg)$}
\UnaryInfC{$x:p\land\neg K_ap\Imp x:\neg [p\land\neg K_ap]K_ap$}
\UnaryInfC{$\Imp x:[p\land\neg K_ap]\neg K_ap$}
\DisplayProof
\]
where $\mc{D}_0$ is:
\[
\small
\AxiomC{$x\sim_ay,x\!:\!p\Imp y\!:\!p,y\!:\!p$}
\AxiomC{$y\!:\!K_ap,x\sim_ay,x\!:\!p\Imp y\!:\!p$}
\UnaryInfC{$x\sim_ay,x\!:\!p\Imp y\!:\!p, y\!:\!\neg K_ap$}
\BinaryInfC{$x\sim_ay,x\!:\!p\Imp y\!:\!p,y\!:\!p\land\neg K_ap$}
\AxiomC{$y\!:\!p,x\sim_ay,x\!:\!p\Imp y\!:\!p$}
\BinaryInfC{$y\!:\![p\land\neg K_ap]p,x\sim_ay,x\!:\!p\Imp y\!:\!p$}
\UnaryInfC{$x\sim_ay,x\!:\!K_a[p\land\neg K_ap]p,x\!:\!p\Imp y\!:\!p$}
\DisplayProof
\]
\end{example}

\section{Admissibility of some structural rules}
In light of the reduction axioms, we can define a translation from $\mc{L}_\msf{PAL}$-formulas to $\mc{L}_\msf{EL}$-formulas\footnote{This is Definition 7.20 in \cite{DIT2007}.}.
\begin{defn}
The translation $t: \mc{L}_\msf{PAL}\to \mc{L}_\msf{EL}$ is defined as follows:

\begin{center}
\begin{tabular}{c c}
   {$   
      \begin{aligned}
&t(p)  &=\quad &p \\
&t(\neg \varphi) &=\quad &\neg t (\varphi)\\
&t(\varphi \land \psi) &=\quad& t(\varphi) \land t (\psi)\\
&t(\varphi \to \psi) &=\quad& t(\varphi) \to t (\psi)\\
&t(K_a\varphi) &=\quad& K_a t(\varphi)\\

\end{aligned}
     $}  & {$\begin{aligned}
     &t(\p{\varphi}p) &=\quad& t(\varphi \to p)\\
&t(\p{\varphi}\neg \psi) &=\quad& t (\varphi \to \neg \p{\varphi} \psi)\\
&t(\p{\varphi}(\psi \land \chi)) &= \quad&t(\p{\varphi} \psi \land \p{\varphi}\chi)\\
&t(\p{\varphi}(\psi \to \chi) )&= \quad&t(\p{\varphi} \psi \to \p{\varphi}\chi)\\
&t(\p{\varphi}K_a \psi) &=\quad&t (\varphi \to K_a\p{\varphi} \psi)\\
&t (\p{\varphi}\p{\psi}\chi) &=\quad &t (\p{\varphi \land \p{\varphi}\psi} \chi)
     
   \end{aligned}$}\\

\end{tabular}
\end{center}

\end{defn}
Now we extend translation $t$ to relational atoms and labelled $\mc{L}_\msf{PAL}$-formulas: for any relational atom $x\sim_ay$, let $t(x\sim_ay) =x\sim_ay$; for any labelled $\mc{L}_\msf{PAL}$-formula $x:\varphi$, $t(x:\varphi) = x:t(\varphi)$. Moreover, for any set $\Gamma$ of relational atoms and labelled formulas: $t(\Gamma)= \{t(\sigma)\mid \sigma \in \Gamma\}$.

\begin{lem}\label{lem:07081609}
For any $\mc{L}_\msf{PAL}$-sequent $x:\varphi,\Gamma \Imp \Delta$, the following hold:
\begin{enumerate}
\item  if $G_\msf{PAL} \vdash  x:t(\varphi), t(\Gamma)\Imp t(\Delta)$, then $G_\msf{PAL}\vdash x:\varphi, t(\Gamma) \Imp t(\Delta)$;
\item  if $G_\msf{PAL} \vdash t(\Gamma)\Imp t(\Delta), x:t(\varphi)$, then $G_\msf{PAL}\vdash t(\Gamma) \Imp t(\Delta), x:\varphi$.
\end{enumerate}
\end{lem}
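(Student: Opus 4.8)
The two clauses merely replace the single distinguished formula $x:t(\varphi)$ by $x:\varphi$ while leaving the translated context $t(\Gamma),t(\Delta)$ in place. My plan is to separate this replacement from the context completely: first prove the \emph{context-free} interderivability of $\varphi$ with $t(\varphi)$,
\[
G_\msf{PAL}\vdash x:\varphi\Imp x:t(\varphi)\qquad\text{and}\qquad G_\msf{PAL}\vdash x:t(\varphi)\Imp x:\varphi,
\]
and then reattach the context by a single cut. I avoid inducting directly on the stated sequents because the modal case $\varphi=K_a\psi$, where $t(K_a\psi)=K_a\,t(\psi)$, would force an untranslation underneath the box, and $(K_a\Imp)$ is not invertible in a way that allows this.

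I would prove interderivability by simultaneous induction on the complexity $c(\varphi)$ of Definition~\ref{defn:compl}. The base case $\varphi=p$ is an initial sequent. The Boolean cases apply the matching left and right rules of $G_\msf{EL}$ together with weakening and the two induction hypotheses, the two directions being interchanged for $\neg$ and $\to$ (which is why both are carried along). The modal case applies $(\Imp K_a)$ to introduce a fresh $y$ with $x\sim_a y$, then $(K_a\Imp)$ to expose $y:\psi$ (respectively $y:t(\psi)$) in the antecedent, and closes with the induction hypothesis at $y$ after weakening in the side formula and the relational atom $x\sim_a y$ (weakening being height-preserving admissible as in Theorem~\ref{thm:adm}). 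Each announcement case $\varphi=\p{\phi}\chi$ rewrites $t(\varphi)$ by its translation clause, applies the corresponding reduction rule, and peels off the outer structure of the unfolding, the leaves being discharged by the induction hypothesis on $\phi$, $\p{\phi}\psi$, and the like, or closed by the $G_\msf{PAL}$ identity lemma; the iterated-announcement case reduces in one step to a single induction hypothesis on its unfolding. That all these appeals are legitimate is precisely what the complexity lemma above guarantees, since every unfolded subformula has strictly smaller $c$.

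Granting interderivability, the lemma is immediate and needs no further structural rules. For (1), cut $G_\msf{PAL}\vdash x:\varphi\Imp x:t(\varphi)$ against the hypothesis $G_\msf{PAL}\vdash x:t(\varphi),t(\Gamma)\Imp t(\Delta)$ on the formula $x:t(\varphi)$: the $(Cut)$ rule merges the two contexts and returns exactly $x:\varphi,t(\Gamma)\Imp t(\Delta)$. Part (2) is the mirror image, cutting the hypothesis $G_\msf{PAL}\vdash t(\Gamma)\Imp t(\Delta),x:t(\varphi)$ against $G_\msf{PAL}\vdash x:t(\varphi)\Imp x:\varphi$. Admissibility of $(Cut)$ is Theorem~\ref{thm:adm}.

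The main obstacle is the interderivability claim itself, and within it the modal and the six announcement cases: for each I must pick the correct direction of the induction hypothesis and verify that the reduction rule reconstitutes precisely the intended principal formula, with the complexity lemma silently underwriting every recursive call. Keeping this claim context-free is what makes the branching cases ($\land$, $\to$, $(R3)$, $(R4)$) go through at all: with a context present it would cease to be of the form $t(\cdot)$ as soon as the first subformula is untranslated, whereas here no context ever appears until the final cut.
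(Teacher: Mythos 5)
There is a genuine gap: your argument is circular given the way the paper's results are ordered. Your final step reattaches the context by cutting $G_\msf{PAL}\vdash x:\varphi\Imp x:t(\varphi)$ against $G_\msf{PAL}\vdash x:t(\varphi),t(\Gamma)\Imp t(\Delta)$. This is a cut performed in $G_\msf{PAL}$, on sequents containing announcement formulas, so it cannot be justified by Theorem~\ref{thm:adm}, which you cite: that theorem establishes cut admissibility only for $G_\msf{EL}$, and the derivations involved here are not $G_\msf{EL}$-derivations (the derivation of $x:\varphi\Imp x:t(\varphi)$ must use reduction rules whenever $\varphi$ contains an announcement operator). Cut admissibility for $G_\msf{PAL}$ is Corollary~\ref{thm:rulesPAL}, and in this paper that corollary is proved \emph{from} Theorem~\ref{thm:10}, which in turn is proved \emph{from} the very Lemma~\ref{lem:07081609} you are trying to establish. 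So as written, your proof assumes a consequence of its own conclusion. The same objection applies, less seriously, to your appeals to weakening in $G_\msf{PAL}$: that too is only available via Corollary~\ref{thm:rulesPAL}, although weakening could be proved independently by a routine induction on derivation height, so that part is repairable.

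To rescue your route you would need an independent, syntactic cut-elimination proof for $G_\msf{PAL}$ (a double induction on the complexity measure $c$ of Definition~\ref{defn:compl} and on cut height, handling the principal--principal cases for each pair of reduction rules). That is plausible but is substantial work, and it is precisely the work the paper's architecture is designed to avoid: the whole point of Lemma~\ref{lem:07081609} and Theorem~\ref{thm:10} is to transfer cut admissibility from $G_\msf{EL}$ to $G_\msf{PAL}$ without redoing cut elimination. The paper's own proof is cut-free throughout: it proceeds by induction on the height of the given derivation of $x:t(\varphi),t(\Gamma)\Imp t(\Delta)$, with a sub-induction on $c(\varphi)$, and directly transforms that derivation --- when $x:t(\varphi)$ is principal and $\varphi$ starts with an announcement, the application of $(\to\Imp)$ or $(\land\Imp)$ on the translated formula is replaced by the corresponding reduction rule; when $\varphi$ is $\p{\phi}\p{\psi}\chi$, the translation clause $t(\p{\phi}\p{\psi}\chi)=t(\p{\phi\land\p{\phi}\psi}\chi)$ lets the sub-induction hypothesis apply without touching the derivation at all. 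Your context-free interderivability claim is likely true and provable cut-free, but proving it does not yield the lemma unless cut is already available, which it is not at this point.
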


\begin{proof}
We prove these claims simultaneously by induction on the height of derivation $h$ of $G_{\msf{PAL}}\vdash x:t(\varphi),t(\Gamma)\Imp t(\Delta)$ (or $G_{\msf{PAL}}\vdash t(\Gamma)\Imp t(\Delta),x:t(\varphi)$).

If $h=1$, then $x:t(\varphi), t(\Gamma)\Imp t(\Delta)$ (or $t(\Gamma)\Imp t(\Delta),x:t(\varphi)$) is an initial sequent. If $x:t(\varphi)$ is principal, then $t(\varphi)=p$ for some proposition letter $p$. It follows that $\varphi =p$. Therefore, $x:\varphi, t(\Gamma)\Imp t(\Delta)$ (or $t(\Gamma)\Imp t(\Delta),x:\varphi$) is also an initial sequent. If $x:t(\varphi)$ is not principal, it is immediate that $x:\varphi, t(\Gamma)\Imp t(\Delta)$ (or $t(\Gamma)\Imp t(\Delta), x:\varphi$) is an initial sequent. 

If $h>1$, the induction hypothesis is formulated as:

\noindent
$(1)'$for all $i<h$, if  $G_\msf{PAL} \vdash_i  x:t(\varphi), t(\Gamma)\Imp t(\Delta)$, then $G_\msf{PAL}\vdash x:\varphi, t(\Gamma) \Imp t(\Delta)$;

\noindent
$(2)'$for all $i<h$, if  $G_\msf{PAL} \vdash_i   t(\Gamma)\Imp t(\Delta),x:t(\varphi)$, then $G_\msf{PAL}\vdash  t(\Gamma) \Imp t(\Delta),x:\varphi$.

In what follows we only give the proof for claim $(1)'$. The proof for the other claim is similar.

Assume that $G_\msf{PAL} \vdash_h  x:t(\varphi), t(\Gamma)\Imp t(\Delta)$. Then there exists a derivation $\mc{D}$ for $x:t(\varphi), t(\Gamma)\Imp t(\Delta)$ in $\GP$. Let the last rule applied in $d$ be $\mc{R}$. If $x:t(\varphi)$ is not principal in the application of $\mc{R}$, the desired result can be obtained by applying the induction hypothesis to the premise(s) of $\mc{R}$ and then applying $\mc{R}$.

If $x:t(\varphi)$ is principal in the application of $\mc{R}$, we prove by an sub-induction on the complexity $c(\varphi)$ of $\varphi$. Since $x:t(\varphi)$ is principal and $h>1$, $\varphi$ is not a proposition letter. We have ten sub-cases. We divide them into two groups depending on whether $\varphi$ starts with an announcement operator or not.

If $\varphi$ does not start with an announcement operator, the desired result can be obtained by applying the induction hypothesis to the premise(s) of $\mc{R}$ and then applying $\mc{R}$. There are four sub-cases: $\varphi$ is of the form $\neg \psi$, $\psi_1\land \psi_2$, $\psi_1\to \psi_2$ or $K_a\psi$. We illustrate this by the cases $\neg \psi$ and $K_a \psi$.

\begin{itemize}
\item If $\varphi = \neg \psi$, then $\mc{R}$ is $(\neg\!\Imp)$. Note that $t (\varphi) = t(\neg \psi) = \neg t(\psi)$. Let the derivation $\mc{D}$ end with
 \[
 \AxiomC{$ t(\Gamma)\Imp t(\Delta), x:t(\psi)$}
\ri{\f$(\neg\! \Imp )$}
\UnaryInfC{$x:\neg t(\psi) , t(\Gamma)\Imp t(\Delta)$}
\DisplayProof
\]
By the induction hypothesis, we have $\GP \vdash t(\Gamma)\Imp t(\Delta), x:\psi$. Then by $(\neg\! \Imp)$ we have $\GP \vdash x:\neg \psi,t(\Gamma)\Imp t(\Delta)$.

\item If $\varphi =K_a \psi$,  then $\mc{R}$ is $(K_a \Imp)$. Note that $t(\varphi) =t(K_a\psi) =K_a t(\psi)$. Let the derivation $\mc{D}$ end with 
\[
\AxiomC{$y:t(\psi),x:K_at(\psi),x \sim_a y, t(\Gamma) \Imp t(\Delta)$}
\ri{\f$(K_a \Imp )$}
\UnaryInfC{$x:K_at(\psi),x \sim_a y, t(\Gamma) \Imp t(\Delta)$}
\DisplayProof
\]
First apply the main induction hypothesis to $x:K_at(\psi)$ and we have $\GP\vdash y:t(\psi),x:K_a \psi,x \sim_a y, t(\Gamma) \Imp t(\Delta)$. Then apply the sub-induction hypothesis to $y:t(\psi)$ and we have $\GP \vdash y:\psi,x:K_a \psi,x \sim_a y, t(\Gamma) \Imp t(\Delta)$. Finally by $(K_a \Imp )$ we have $\GP\vdash x:K_a \psi,x \sim_a y, t(\Gamma) \Imp t(\Delta)$.
\end{itemize}

If $\varphi$ starts with an announcement operator, then there are six sub-cases: $\varphi$ is $\p{\phi}p$, $\p{\phi}\neg \psi$, $\p{\phi}(\psi_1\land\psi_2)$, $\p{\phi}(\psi_1\to \psi_2)$, $\p{\phi}K_a\psi$ or $\p{\phi}\p{\psi}\chi$. 

If $\varphi$ is $\p{\phi}p$, $\p{\phi}\neg \psi$, $\p{\phi}(\psi_1\to \psi_2)$ or $\p{\phi}K_a\psi$, then $t(\varphi)$ is $t(\phi) \to p$, $t(\phi) \to t(\neg \p{\phi}\psi)$, $t(\p{\phi}\psi_1)\to t(\p{\phi}\psi_2)$, or $t(\phi) \to t(K_a \p{\phi}\psi)$, respectively. Since $x:t(\varphi)$ is principal, $\mc{R}$ must be $(\to \Imp)$.We substitute the application of $(\to \Imp)$ with  an application of $(R1\Imp)$, $(R_2\Imp)$, $(R_4\Imp)$ and $(R_5\Imp)$, respectively. We illustrate the proof by the case where $\varphi$ is $\p{\phi}\neg \psi$ and the case where $\varphi$ is $\p{\phi}K_a\psi$.

\begin{itemize}
\item If $\varphi$ is $\p{\phi}\neg \psi$, then the derivation $\mc{D}$ ends with
\[
\AxiomC{$t(\Gamma)\Imp t(\Delta), x:t(\phi)$}
\AxiomC{$x:t(\neg \p{\phi}\psi) ,t(\Gamma) \Imp t(\Delta)$}
\ri{\f$(\to \Imp)$}
\BinaryInfC{$x:t(\phi) \to t(\neg \p{\phi}\psi),t(\Gamma)\Imp t(\Delta)$}
\DisplayProof
\]
Apply the induction hypothesis to the premises and we have $\GP\vdash t(\Gamma)\Imp t(\Delta), \phi$ and $\GP\vdash \neg \p{\phi}\psi ,t(\Gamma) \Imp t(\Delta)$. Then by $(R2\Imp)$ we have $\GP\vdash x:\p{\phi}\neg \psi, t(\Gamma)\Imp t(\Delta)$.

\item If $\varphi$ is $\p{\phi}K_a \psi$, then the derivation $\mc{D}$ ends with
\[
\AxiomC{$t(\Gamma)\Imp t(\Delta), x:t(\phi)$}
\AxiomC{$x:t(K_a \p{\phi}\psi) ,t(\Gamma) \Imp t(\Delta)$}
\ri{\f$(\to \Imp)$}
\BinaryInfC{$x:t(\phi) \to t(K_a \p{\phi}\psi),t(\Gamma)\Imp t(\Delta)$}
\DisplayProof
\]
Apply the induction hypothesis to the premises and we have $\GP\vdash t(\Gamma)\Imp t(\Delta), \phi$ and $\GP\vdash K_a \p{\phi}\psi ,t(\Gamma) \Imp t(\Delta)$. Then by $(R5\Imp)$ we have $\GP\vdash x:\p{\phi}K_a \psi, t(\Gamma)\Imp t(\Delta)$.
\end{itemize}

 If $\varphi$ is $\p{\phi}(\psi_1\land\psi_2)$, then $t(\varphi)$ is $t(\p{\phi}\psi_1)\land t(\p{\phi}\psi_2)$. Since $x:t(\varphi)$ is principal, $\mc{R}$ must be $(\land\!\Imp)$. We substitute the application of $(\land\!\Imp)$ with an application of $(R3\Imp)$. Let the derivation $\mc{D}$ end with 
\[
\AxiomC{$x:t(\p{\phi}\psi_1), x:t(\p{\phi}\psi_2),t(\Gamma)\Imp t(\Delta)$}
\ri{\f$(\land\Imp)$}
\UnaryInfC{$x:t(\p{\phi}\psi_1)\land t(\p{\phi}\psi_2),t(\Gamma)\Imp t(\Delta)$}
\DisplayProof
\]
Apply the sub-induction hypothesis to the premise twice and we have $\GP\vdash x:\p{\phi}\psi_1, x:\p{\phi}\psi_2,t(\Gamma)\Imp t(\Delta)$. Then by $(R3\Imp)$ we have $\GP\vdash x: \p{\phi}(\psi_1\land \psi_2), t(\Gamma)\Imp t(\Delta)$. 

 If $\varphi$ is $\p{\phi}\p{\psi}\chi$, by assumption, $\GP\vdash x: t(\p{\phi}\p{\psi}\chi),t(\Gamma)\Imp t(\Delta)$. Since $t(\p{\phi}\p{\psi}\chi) = t(\p{\phi\land \p{\phi}\psi}\chi)$, $\GP\vdash x:t(\p{\phi\land \p{\phi}\psi}\chi),t(\Gamma)\Imp t(\Delta)$. Since $c(\p{\phi\land \p{\phi}\psi}\chi) <c(\p{\phi}\p{\psi}\chi)$, by the sub-induction hypothesis, we have $\GP\vdash x: \p{\phi\land \p{\phi}\psi}\chi,t(\Gamma)\Imp t(\Delta)$. Then by $(R6\Imp)$, $\GP\vdash x:\p{\phi}\p{\psi}\chi,t(\Gamma)\Imp t(\Delta)$.
 
This completes the proof.
\end{proof}

The following theorem is a bridge between $G_\msf{PAL}$ and $G_\msf{EL}$, enabling us to prove properties of $G_\msf{PAL}$ through $G_\msf{EL}$. 

\begin{thm}\label{thm:10}
For any $\mc{L}_\msf{PAL}$-sequent $\Gamma \Imp \Delta$, 
\begin{enumerate}
\item if $G_\msf{EL}\vdash t(\Gamma) \Imp t(\Delta)$, then $G_\msf{PAL}\vdash \Gamma \Imp \Delta$;
\item if $G_\msf{PAL}\vdash_h \Gamma \Imp \Delta$, then $G_\msf{EL} \vdash_h t(\Gamma)\Imp t(\Delta)$.
\end{enumerate}
\end{thm}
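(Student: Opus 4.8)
The plan is to prove the two implications by two separate inductions that reuse the machinery already developed. Part (1) will follow from Lemma~\ref{lem:07081609} by iterating it over all labelled formulas of the sequent, while part (2) will be a height-preserving induction on the $\GP$-derivation in which each reduction rule is shown to collapse, under $t$, onto an ordinary $\GE$-rule.

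For part (1), the first observation is that every rule of $\GE$ is a rule of $\GP$, so $\GE\vdash t(\Gamma)\Imp t(\Delta)$ immediately yields $\GP\vdash t(\Gamma)\Imp t(\Delta)$. It then remains to replace each translated labelled formula $x:t(\varphi)$ occurring in the sequent by the original $x:\varphi$. I would do this by induction on the number of labelled formulas in $\Gamma\cup\Delta$ (relational atoms are fixed by $t$): at each step I single out one labelled formula, apply clause (1) or clause (2) of Lemma~\ref{lem:07081609} to untranslate it, and recurse on the rest. The point to check is that Lemma~\ref{lem:07081609} untranslates a single designated formula while leaving the surrounding context untouched, and that its proof in fact goes through for an \emph{arbitrary} context (indeed, in the proof the formula $y:t(\psi)$ is already untranslated in a context containing the untranslated $x:K_a\psi$). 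Hence no difficulty arises from the fact that formulas untranslated at earlier stages sit in the context of later applications.

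For part (2), I would argue by induction on the height $h$ of the $\GP$-derivation, applying $t$ to every sequent in the derivation and inspecting the last rule. If it is an initial sequent or one of the propositional, modal or relational rules of $\GE$, then, because $t$ is the identity on relational atoms, commutes with $\neg,\land,\to,K_a$, and does not touch labels, the translated premises and conclusion form an instance of the very same $\GE$-rule (the eigenvariable condition of $(\Imp K_a)$ survives since $t$ leaves labels unchanged), and the height is unchanged. The heart of the matter is the reduction rules. Using the defining equations of $t$, one checks that the translation of the principal formula is exactly the compound $\GE$-formula whose decomposition the translated premises provide: for $(R1\Imp),(R2\Imp),(R4\Imp),(R5\Imp)$ the principal formula translates to an implication, so the translated step is an instance of $(\to\Imp)$; dually $(\Imp R1),(\Imp R2),(\Imp R4),(\Imp R5)$ become $(\Imp\to)$; $(R3\Imp)$ becomes $(\land\Imp)$ and $(\Imp R3)$ becomes $(\Imp\land)$. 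For $(R6\Imp)$ and $(\Imp R6)$ the equation $t(\p{\varphi}\p{\psi}\chi)=t(\p{\varphi\land\p{\varphi}\psi}\chi)$ makes the translated premise and conclusion literally identical, so the rule disappears and the height only drops. Since each $\GP$-step maps to at most one $\GE$-step, the resulting $\GE$-derivation has height at most $h$, as required.

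The main obstacle is conceptual rather than computational, and it lies in part (2): one must verify that the reduction rules were designed so that, under $t$, the principal formula and the active formulas of the premises fit together into a single instance of a propositional or modal $\GE$-rule. Concretely, one checks that the translation of the principal formula of each reduction rule is the compound formula whose immediate constituents are exactly the translations of the active formulas of the premises. This is where the definition of $t$ and the shape of the reduction rules must mesh, and it has to be confirmed for all twelve rules; most are routine, but $(R2)$ and $(R5)$, where a $\neg$ or a $K_a$ sits under the residual announcement, and $(R6)$, which collapses to a no-op, deserve explicit attention. In part (1) the only delicate point is the one already flagged, namely that the untranslation lemma must be read as permitting an arbitrary context; once that is granted, the iteration is routine.
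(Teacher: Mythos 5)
Your proposal is correct and follows essentially the same route as the paper: part (1) by noting $G_\msf{PAL}$ extends $G_\msf{EL}$ and then iterating Lemma~\ref{lem:07081609} over the finitely many labelled formulas, and part (2) by induction on derivation height, mapping the reduction rules for (R1), (R2), (R4), (R5) to $(\to\Imp)$/$(\Imp\to)$, those for (R3) to the $\land$-rules, and collapsing (R6) to a no-op so that height never increases. The one place where you are more careful than the paper---insisting that the untranslation lemma be read as allowing an arbitrary, partially untranslated context, since after the first application of the lemma the remaining context is no longer literally of the form $t(\Gamma)\Imp t(\Delta)$---is a legitimate refinement of the same argument rather than a different approach.
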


\begin{proof}

(1) Assume that $G_\msf{EL} \vdash t(\Gamma) \Imp t(\Delta)$. Since $G_\msf{PAL}$ is an extension of $G_\msf{EL}$, $G_\msf{PAL} \vdash t(\Gamma) \Imp t(\Delta)$. Since $t(\Gamma) \Imp t(\Delta)$ is finite, applying Lemma \ref{lem:07081609} a finite number of times, we have $G_\msf{PAL} \vdash \Gamma \Imp \Delta$.

(2)  Prove by induction on $h$.

If $h=1$, then $\Gamma \Imp \Delta$ is an initial sequent in $G_\msf{PAL}$. By definition, it is also an initial sequent in $G_\msf{EL}$.

If $h>1$, we consider the last rule $\mc{R}$ applied in the derivation. If $\mc{R}$ is not a reduction rule, the claim can be proved by first applying the induction hypothesis to the premise(s) and then applying $\mc{R}$. 

If $\mc{R}$ is a reduction rule for (R1), (R2), (R4) or (R5), we apply the induction hypothesis to the premise(s), and then apply $(\to \Imp)$ or $(\Imp \to)$. We illustrate this by a few cases.

If $\mc{R}$ is $(R 1\Imp)$, let the derivation end with:
\[
\AxiomC{$ \Gamma' \Imp \Delta , x:\varphi$}
\AxiomC{$x:p,\Gamma' \Imp \Delta$}
\ri{\f$(R1\Imp)$}
\BinaryInfC{$x:\p {\varphi} p,\Gamma' \Imp \Delta$}
\DisplayProof
\]
By the induction hypothesis, we have $G_\msf{EL}\vdash_{h-1} t(\Gamma') \Imp t(\Delta) , x:t(\varphi)$ and $G_\msf{EL}\vdash_{h-1}x:t(p),t(\Gamma') \Imp t(\Delta)$. We proceed as follows:
\[
\AxiomC{$t(\Gamma') \Imp t(\Delta) , x:t(\varphi)$}
\AxiomC{$x:t(p),t(\Gamma') \Imp t(\Delta)$}
\ri{\f$(\to\Imp)$}
\BinaryInfC{$x:t(\varphi) \to t(p),t(\Gamma') \Imp t(\Delta)$}
\DisplayProof
\]
Since $t(x:\p{\varphi}p)= x:t(\varphi) \to t(p)$, we have 
\[
G_\msf{EL}\vdash_h  t(x:\p{\varphi}p), t (\Gamma')\Imp t(\Delta) .
\]

If $\mc{R}$ is $(\Imp R2)$, let the derivation end with:
\[
\AxiomC{$x:\varphi, \Gamma \Imp \Delta' , x:\neg \p{\varphi}\psi $}
\ri{\f$(\Imp R2)$}
\UnaryInfC{$\Gamma \Imp \Delta',x :\p{\varphi}\neg \psi $}
\DisplayProof
\]
By the induction hypothesis, we have $G_\msf{EL}\vdash_{h-1}x:t(\varphi), t(\Gamma)\Imp t(\Delta') , x:t(\neg \p{\varphi}\psi )$. We proceed as follows:
\[
\AxiomC{$x:t(\varphi), t(\Gamma)\Imp t(\Delta') , x:t(\neg \p{\varphi}\psi )$}
\ri{\f$(\Imp \to)$}
\UnaryInfC{$t(\Gamma) \Imp t(\Delta'),x :t(\varphi) \to t(\neg \p{\varphi}\psi )$}
\DisplayProof
\]
Since $t(x:\p{\varphi}\neg \psi ) = x:t(\varphi) \to t(\neg \p{\varphi}\psi )$, we have 
\[
G_\msf{EL}\vdash_h t (\Gamma)\Imp t(\Delta'), t(x:\p{\varphi}\neg p) .
\]

If $\mc{R}$ is $(R 5\Imp)$, let the derivation end with:
\[
\AxiomC{$ \Gamma' \Imp \Delta , x:\varphi$}
\AxiomC{$x: K_a\p{\varphi} \psi,\Gamma' \Imp \Delta$}
\ri{\f$(R5\Imp)$}
\BinaryInfC{$x:\p {\varphi} K_a \psi,\Gamma' \Imp \Delta$}
\DisplayProof
\]
By the induction hypothesis, we have $G_\msf{EL}\vdash_{h-1} t(\Gamma') \Imp t(\Delta) , x:t(\varphi)$ and $G_\msf{EL}\vdash_{h-1}x: t(K_a\p{\varphi}  \psi),t(\Gamma') \Imp t(\Delta)$. We proceed as follows:
\[
\AxiomC{$t(\Gamma') \Imp t(\Delta) , x:t(\varphi)$}
\AxiomC{$x:t(K_a\p {\varphi}  \psi),t(\Gamma') \Imp t(\Delta)$}
\ri{\f$(\to\Imp)$}
\BinaryInfC{$x:t(\varphi) \to t( K_a\p{\varphi}  \psi),t(\Gamma' )\Imp t(\Delta)$}
\DisplayProof
\]
Since $t(x:\p{\varphi}K_a \psi) = x:t(\varphi) \to t(K_a  \p{\varphi}\psi)$, we have 
\[
G_\msf{EL}\vdash_h t(x:\p{\varphi}K_a \psi), t (\Gamma')\Imp t(\Delta) .
\]

If $\mc{R}$ is a reduction rule for (R3), we apply the induction hypothesis to the premise(s), and then apply an inference rule for $\land$. We illustrate this by the case where $\mc{R}$ is $(R3\Imp)$.

If $\mc{R}$ is $(R3\Imp)$, let the derivation end with:
\[
\AxiomC{$x:\p{\varphi}\psi_1,x:\p{\varphi}\psi_2,\Gamma' \Imp\Delta $}
\ri{\f$(R3 \Imp)$}
\UnaryInfC{$x: \p{\varphi}(\psi_1\land \psi_2),\Gamma' \Imp \Delta$}
\DisplayProof
\]
By induction hypothesis, we have $G_\msf{EL}\vdash_{h-1} x:t(\p{\varphi}\psi_1), x:t( \p{\varphi}\psi_2), t(\Gamma') \Imp t(\Delta) $. We proceed as follows:
\[
\AxiomC{$x:t(\p{\varphi}\psi_1), x:t( \p{\varphi}\psi_2), t(\Gamma') \Imp t(\Delta) $}
\ri{\f$(\land \Imp)$}
\UnaryInfC{$ x:t(\p{\varphi}\psi_1)\land t( \p{\varphi}\psi_2), t(\Gamma') \Imp t(\Delta)$}
\DisplayProof
\]
Since $t(x:\p{\varphi}(\psi_1\land \psi_2)) =x: t(\p{\varphi}\psi_1)\land t( \p{\varphi}\psi_2)$, we have 
\[
G_\msf{EL}\vdash_{h} t({x:{\p{\varphi}(\psi_1\land \psi_2)}}), t(\Gamma') \Imp t(\Delta) .
\]

If $\mc{R}$ is a reduction rule for (R6), then we simply apply the induction hypothesis to the premise.

Thus completes the proof.
\end{proof}

\begin{rem}
Since $\msf{PAL}$ is more succinct than $\msf{EL}$ (see \cite{lutz2006complexity,french2013succinctness}), one might expect that a derivation of a sequent $\Gamma\Imp\Delta$ in $G_\msf{PAL}$ is strictly shorter (with respect to derivation height) than a derivation of $t(\Gamma)\Imp t(\Delta)$ in $G_\msf{EL}$. This is not the case for $G_\msf{PAL}$ and $G_\msf{EL}$ as shown in Theorem \ref{thm:10}. This is because a derivation for $\Gamma\Imp\Delta$ in $G_\msf{PAL}$ is obtained by executing the translation function $t$ (encoded by the reduction rules) in a derivation for $t(\Gamma)\Imp t(\Delta)$ in $G_{\msf{EL}}$, which increases the length of the derivation.
\end{rem}

\begin{cor}\label{thm:rulesPAL}
The following structural rules are admissible in $G_\msf{PAL}$:
\begin{align*}
&(w\Imp)~\frac{\Gamma \Imp \Delta}{x:\varphi, \Gamma \Imp \Delta}
&
\quad
&(\Imp w)~\frac{\Gamma \Imp \Delta}{\Gamma \Imp \Delta, x:\varphi}
\\
&(c\Imp)~\frac{x:\varphi,x:\varphi,\Gamma \Imp \Delta}{x:\varphi,\Gamma \Imp \Delta}
&\quad
&(\Imp c)~\frac{\Gamma \Imp \Delta,x:\varphi,x:\varphi}{\Gamma \Imp \Delta, x:\varphi}
\\
&(c_R\Imp)~\frac{x\sim_ay,x\sim_ay,\Gamma \Imp \Delta}{x\sim_ay,\Gamma \Imp \Delta}
&\quad
&(\Imp c_R)~\frac{\Gamma \Imp \Delta,x\sim_ay,x\sim_ay}{\Gamma \Imp \Delta, x\sim_ay}
\\
&(Cut)~\frac{\Gamma \Imp \Delta, x:\varphi \quad x:\varphi ,\Gamma' \Imp \Delta'}{\Gamma ,\Gamma'\Imp \Delta, \Delta'}
\end{align*}
\end{cor}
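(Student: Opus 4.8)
The plan is to transfer the admissibility of each structural rule from $G_\msf{EL}$ to $G_\msf{PAL}$ by a round trip through the translation $t$, using Theorem \ref{thm:10} as the bridge and Theorem \ref{thm:adm}, which already establishes that all of these rules are admissible in $G_\msf{EL}$. The crucial preliminary observation is that $t$ interacts with multisets and labelling in the obvious way: $t(\Gamma,\Gamma') = t(\Gamma), t(\Gamma')$, $t(x:\varphi) = x:t(\varphi)$, and $t$ fixes every relational atom $x\sim_a y$. Consequently, the $t$-image of any instance of one of these structural rules is again an instance of the corresponding structural rule in $G_\msf{EL}$, now acting on the translated sequents, and the translated conclusion is again of the form $t(\Gamma)\Imp t(\Delta)$ so that Theorem \ref{thm:10}(1) applies to lift it back.

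Concretely, for each rule I would argue in three steps. Assume the premise(s) are derivable in $G_\msf{PAL}$. First apply Theorem \ref{thm:10}(2) to push each premise down to a $G_\msf{EL}$-derivation of its $t$-translation. Then invoke the admissibility of the same structural rule in $G_\msf{EL}$ (Theorem \ref{thm:adm}) to obtain a $G_\msf{EL}$-derivation of the translated conclusion. Finally apply Theorem \ref{thm:10}(1) to lift this back to a $G_\msf{PAL}$-derivation of the original conclusion. For weakening $(w\Imp)$, for instance: from $G_\msf{PAL}\vdash \Gamma \Imp \Delta$ we obtain $G_\msf{EL}\vdash t(\Gamma)\Imp t(\Delta)$, hence $G_\msf{EL}\vdash x:t(\varphi), t(\Gamma)\Imp t(\Delta)$ by admissible weakening in $G_\msf{EL}$; since $x:t(\varphi), t(\Gamma)$ is precisely $t(x:\varphi,\Gamma)$, Theorem \ref{thm:10}(1) yields $G_\msf{PAL}\vdash x:\varphi,\Gamma \Imp \Delta$. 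The contraction rules $(c\Imp)$, $(\Imp c)$, $(c_R\Imp)$, $(\Imp c_R)$ follow the identical pattern; for the relational contractions the argument is even more immediate, since $t$ leaves the contracted atom $x\sim_a y$ untouched.

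The case of $(Cut)$ is the only one worth spelling out. From the two $G_\msf{PAL}$-premises $\Gamma \Imp \Delta, x:\varphi$ and $x:\varphi, \Gamma' \Imp \Delta'$, Theorem \ref{thm:10}(2) gives $G_\msf{EL}\vdash t(\Gamma)\Imp t(\Delta), x:t(\varphi)$ and $G_\msf{EL}\vdash x:t(\varphi), t(\Gamma')\Imp t(\Delta')$. Here the cut formula in $G_\msf{EL}$ is $t(\varphi)$, which is a genuine $\mc{L}_\msf{EL}$-formula, so cut admissibility in $G_\msf{EL}$ applies and yields $G_\msf{EL}\vdash t(\Gamma), t(\Gamma')\Imp t(\Delta), t(\Delta')$. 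Since $t(\Gamma), t(\Gamma') = t(\Gamma,\Gamma')$ and likewise on the right, this sequent is $t(\Gamma,\Gamma')\Imp t(\Delta,\Delta')$, so Theorem \ref{thm:10}(1) delivers $G_\msf{PAL}\vdash \Gamma, \Gamma'\Imp \Delta, \Delta'$.

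The main point to be careful about is conceptual rather than computational: one must confirm that Theorem \ref{thm:10}(1) is available for every $G_\msf{EL}$-derivable sequent of the form $t(\Gamma)\Imp t(\Delta)$, and not merely in special cases. Inspecting its proof, which applies Lemma \ref{lem:07081609} finitely often to reinstate announcement operators one at a time, confirms that this is so. Note also that the corollary asks only for plain admissibility in $G_\msf{PAL}$; since Theorem \ref{thm:10}(1) does not preserve derivation height, no height-preserving bookkeeping is attempted, and the plain admissibility of the structural rules in $G_\msf{EL}$ is exactly what is needed to conclude plain admissibility in $G_\msf{PAL}$.
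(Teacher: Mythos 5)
Your proof is correct and takes essentially the same route as the paper: translate the premises into $G_\msf{EL}$ via Theorem~\ref{thm:10}(2), invoke the admissibility of the corresponding structural rule in $G_\msf{EL}$ (Theorem~\ref{thm:adm}), and lift the result back to $G_\msf{PAL}$ with Theorem~\ref{thm:10}(1). The paper's proof is exactly this argument, spelled out only for the $(Cut)$ case, which you also treat identically.
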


\begin{proof}
It follows directly from Theorem \ref{thm:adm} and Theorem \ref{thm:10}. Take $(Cut)$ as an example. Assume that $G_\msf{PAL}\vdash \Gamma \Imp \Delta, x:\varphi$ and $x:\varphi ,\Gamma' \Imp \Delta'$. By Theorem \ref{thm:10}, $G_\msf{EL}\vdash t(\Gamma) \Imp t(\Delta) , x: t(\varphi)$ and $G_\msf{EL} \vdash x:t(\varphi) ,t(\Gamma') \Imp t(\Delta')$. Then by admissibility of cut in $G_\msf{EL}$ (Theorem \ref{thm:adm}), $G_\msf{EL} \vdash t(\Gamma),  t(\Gamma') \Imp t(\Delta), t(\Delta')$. By Theorem \ref{thm:10} again, $G_\msf{PAL} \vdash \Gamma,\Gamma' \Imp \Delta, \Delta'$.
\end{proof}

\begin{thm}[Soundness and Completeness]
For any $\mc{L}_\msf{PAL}$-formula $\varphi$, $\varphi \in \msf{PAL}$ iff $G_\msf{PAL}\vdash \Imp \varphi$.
\end{thm}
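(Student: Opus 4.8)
The plan is to prove both directions by transferring the question to $G_\msf{EL}$ along the translation $t$, using Theorem \ref{thm:10} as the bridge. The two facts I would treat as given are the adequacy of $G_\msf{EL}$ for $\msf{EL}$ (that $G_\msf{EL}\vdash{}\Imp x:\psi$ iff $\psi\in\msf{EL}$ for every $\mc{L}_\msf{EL}$-formula $\psi$, from \cite{hakli2007proof,negri2005proof}) and Theorem \ref{thm:10}, which upon taking $\Gamma=\varnothing$ and $\Delta=x:\varphi$ (so that $t(\Gamma)=\varnothing$ and $t(\Delta)=x:t(\varphi)$) specialises to the biconditional: $G_\msf{EL}\vdash{}\Imp x:t(\varphi)$ iff $G_\msf{PAL}\vdash{}\Imp x:\varphi$, with part (1) supplying one implication and part (2) the other.

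The one genuinely new ingredient is semantic: the translation $t$ preserves truth, i.e. $\mc{M},w\Vdash\varphi$ iff $\mc{M},w\Vdash t(\varphi)$ for every epistemic model $\mc{M}$ and world $w$. This is just the semantic validity of the reduction axioms (R1--6) pushed through the formula by induction on the complexity $c(\varphi)$ of Definition \ref{defn:compl}, and is established in \cite{DIT2007}; I would cite it rather than reprove it. Because $t(\varphi)$ is an $\mc{L}_\msf{EL}$-formula and $\msf{PAL}$ and $\msf{EL}$ are defined by validity over the \emph{same} class of epistemic frames, truth-preservation at once yields the key equivalence $\varphi\in\msf{PAL}$ iff $t(\varphi)\in\msf{EL}$.

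The theorem then follows from a chain of equivalences: $\varphi\in\msf{PAL}$ iff $t(\varphi)\in\msf{EL}$ (key equivalence) iff $G_\msf{EL}\vdash{}\Imp x:t(\varphi)$ (adequacy of $G_\msf{EL}$) iff $G_\msf{PAL}\vdash{}\Imp x:\varphi$ (Theorem \ref{thm:10}). Reading the chain left to right gives completeness and right to left gives soundness. I expect all the real work to sit in the truth-preservation of $t$; everything else is bookkeeping around Theorem \ref{thm:10} and the assumed adequacy of $G_\msf{EL}$. The subtlety there is that $t$ is not mere erasure of announcements, since a clause such as $t(\p{\varphi}\p{\psi}\chi)=t(\p{\varphi\land\p{\varphi}\psi}\chi)$ can increase the number of announcement operators; hence the induction must run on $c$ rather than on subformula order, and each reduction-axiom instance must be checked valid on every epistemic model, using that the restriction $\mc{M}^{\varphi}$ again has reflexive, transitive and symmetric relations. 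As this is precisely the content of the cited result in \cite{DIT2007}, the proof reduces to the equivalence chain displayed above.
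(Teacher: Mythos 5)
Your proof is correct, but it takes a genuinely different route from the paper's. The paper argues the two directions separately: for completeness (left to right) it simulates the Hilbert axiomatization inside $G_\msf{PAL}$ --- the $\msf{EL}$ axioms come from $G_\msf{EL}$, the reduction axioms are derived root-first via the reduction rules (Example \ref{exa:0414}), (MP) is admissible thanks to cut admissibility (Corollary \ref{thm:rulesPAL}), and (GK$_a$) via its admissibility in $G_\msf{EL}$ together with Theorem \ref{thm:10}; for soundness (right to left) it checks directly that every rule of $G_\msf{PAL}$ preserves validity. You instead collapse both directions into one chain of equivalences through the translation $t$: $\varphi\in\msf{PAL}$ iff $t(\varphi)\in\msf{EL}$ iff $G_\msf{EL}\vdash{}\Imp x:t(\varphi)$ iff $G_\msf{PAL}\vdash{}\Imp x:\varphi$, using both halves of Theorem \ref{thm:10} as the proof-theoretic bridge. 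The trade-offs are real: your argument needs neither cut admissibility nor the completeness of the Hilbert axiomatization for $\msf{PAL}$ (which the paper's ``it suffices to show'' silently invokes), and it exploits Theorem \ref{thm:10} in both directions, which is arguably the most economical use of the machinery the paper builds; in exchange you must import two black boxes --- truth-preservation of $t$ (i.e.\ validity of the reduction axioms, from \cite{DIT2007}) and full adequacy of $G_\msf{EL}$ for $\msf{EL}$, which the paper itself only cites as ``mentioned without proof'' in \cite{hakli2007proof}. Since the paper's own completeness argument already leans on fragments of the same facts about $G_\msf{EL}$ (derivability of the $\msf{EL}$ axioms, admissibility of (GK$_a$)), your reliance is no heavier than theirs; what you lose is the self-contained, rule-by-rule soundness check and the demonstration that $G_\msf{PAL}$ derives the reduction axioms outright. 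Your specialization of Theorem \ref{thm:10} to $\Gamma=\varnothing$, $\Delta=x:\varphi$, and your observation that the truth-preservation induction must run on the complexity measure $c$ of Definition \ref{defn:compl} rather than on subformula structure (because of the clause for $\p{\varphi}\p{\psi}\chi$), are both exactly right.
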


\begin{proof}
For the left-to-right direction, it suffices to show that each axiom in the Hilbert-style axiomatization for $\msf{PAL}$ is derivable in $G_\msf{PAL}$ and that (MP) and (GK$_a$) are admissible in $G_\msf{PAL}$. Since $G_\msf{PAL}$ is an extension of $G_\msf{EL}$, derivability of axioms in $\msf{EL}$. Admissibility of (MP) follows from admissibility of cut in $G_\msf{PAL}$ (Corollary \ref{thm:rulesPAL}). Admissibility of (GK$_a)$ in $G_\msf{PAL}$ follows from admissibility of (GK$_a)$ in $G_\msf{EL}$ and Theorem \ref{thm:10}. For the reduction axioms, since each reduction axiom has a corresponding pair of sequent rules in $G_\msf{PAL}$, their derivations can be obtained by direct root-first search. One derivation of (R5) is given in Example \ref{exa:0414}.

For the right-to-left direction, it suffices to show that each sequent rule in $G_\msf{PAL}$ is valid. This is routine. We omit the proof here.
\end{proof}

\section{Decidability}

In this section we show that $G_\msf{PAL}$ allows terminating proof search. This result can be proved indirectly by Theorems \ref{thm:ElDeci} and \ref{thm:10}. In this section we present a direct proof.

Readers familiar with the proof for terminating proof search in $G_\msf{EL}$ may notice that the proof for terminating proof search in $G_\msf{PAL}$ is essentially the same as that in $G_\msf{EL}$ because $G_\msf{PAL}$ is an extension of $G_\msf{EL}$ with some reduction rules. For this reason, we sketch the proof in this section and refer to \cite{negri2005proof} for a detailed description.

To show that $G_\msf{PAL}$ allows terminating proof search, first we extend the notion of \emph{subformulas} for $\mc{L}_\msf{PAL}$-formulas, because active formulas are not subformulas of the principal formula in the usual sense in the reduction rules of $G_\msf{PAL}$.

\begin{defn}
Let $\varphi$ be a $\mc{L}_\msf{PAL}$-formula. Formula $\psi$ is called a \emph{semi-subformula} of $\varphi$ if one of the following conditions hold:
\begin{enumerate}
\item $\psi$ is a subformula of $\varphi$;
\item $\varphi= \p{\phi}\ast \chi$ and $\psi =\ast \p{\phi}\chi$, where $\ast \in \{\neg, K_a\}$;
\item $\varphi= \p{\phi}( \chi_1\star \chi_2)$ and $\psi =\p{\phi} \chi_1$ or $\psi =\p{\phi}\chi_2$, where $\star \in \{\land, \to\}$; 
\item $\varphi = \p{\phi}\p{\chi}\xi$ and $\psi = \p{\phi \land \p{\phi}\chi} \xi$.
\end{enumerate}
The set of semi-subformulas of $\varphi$ is denoted by $SSub(\varphi)$. We say that $\psi$ is a \emph{proper semi-subformula} of $\varphi$ if $\psi$ is a semi-subformula of $\varphi$ and $\psi \not =\varphi$.
\end{defn}
 By Definition \ref{defn:compl}, the complexity of a proper semi-subformula of $\varphi$ is strictly smaller than that of $\varphi$.

\begin{defn}
A derivation in $G_\msf{PAL}$ satisfies the \emph{weak subformula property} if all expressions in the derivation are either semi-subformulas of formulas $\varphi$ in labelled formulas $x:\varphi$ in the endsequent of the derivation, or relational atoms of the form $x\sim_ay$.
\end{defn}

It is easy to verify that each derivation in $G_\msf{PAL}$ satisfies the weak subformula property.

The potential sources of non-terminating proof search in $G_\msf{PAL}$ are as follows:
\begin{enumerate}[(1)]
\item $(Trans_a)$, $(Sym_a)$ (read root-first) can be applied infinitely on the same principal formulas.
\item $(Ref_a)$ (read root-first) can be applied infinitely to introduce new relational atoms. 
\item ${(K_a\Imp)}$ (read root-first) can be applied infinitely on the same principal formulas.
\item By $(Trans_a)$ and its iteration with $( \Imp K_a)$ that brings in new accessible worlds, we can build chains of accessible worlds on which $(K_a\Imp)$ can be applied infinitely.
\end{enumerate}
 To show that the space of root-first proof search is finite, it is useful to look at \emph{minimal derivations}, that is, derivations where shortenings are not possible. 
 
For (1), since no rules, read root first, can change a relational atom, any application of $(Trans_a)$ or $(Sym_a)$ on the same principal formulas will make the derivation fail to be minimal. Therefore, this case should be excluded when searching for minimal derivation.

For (2),  we need the following lemma:

\begin{lem}
All variables in relational atoms of the form $x\sim_ax$ removed by $(Ref_a)$ in a minimal derivation of  a sequent $\Gamma \Imp \Delta$ in $G_\msf{PAL}$ are variables in $\Gamma$ or $\Delta$.
\end{lem}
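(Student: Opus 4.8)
The plan is to reason about the only rule that can \emph{introduce} a relational atom of the form $x\sim_a x$ when the derivation is read root-first, namely $(\text{Ref}_a)$ itself, and to track where the variable $x$ must come from. The key observation is that in $G_\msf{PAL}$ every rule other than $(\text{Ref}_a)$ and $(\Imp K_a)$ either preserves the set of variables occurring in the sequent or, in the case of $(\text{Trans}_a)$ and $(\text{Sym}_a)$, only recombines variables already present in the relational atoms of the conclusion. The rule $(\Imp K_a)$ is the sole rule that, read root-first, introduces a genuinely fresh variable $y$ (by its eigenvariable side condition). So the first step is to classify each variable appearing anywhere in the derivation as either a variable of the endsequent $\Gamma\Imp\Delta$ or a variable introduced as an eigenvariable by some application of $(\Imp K_a)$.

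Next I would argue by contradiction using minimality. Suppose some application of $(\text{Ref}_a)$ in the minimal derivation removes an atom $x\sim_a x$ where $x$ is \emph{not} a variable of $\Gamma$ or $\Delta$. Then $x$ must have been introduced higher up as an eigenvariable by some application of $(\Imp K_a)$, since that is the only way a fresh variable can enter. I would then trace the dependency: the eigenvariable $y$ of an $(\Imp K_a)$ application occurs in its premise inside a relational atom $x'\sim_a y$ together with a labelled formula $y:\psi$, and by the eigenvariable condition $y$ does not occur in the conclusion of that very application. The point to extract is that such a fresh $y$ can only be \emph{used} further up the branch, never carried down past the $(\Imp K_a)$ step; in particular the atom $y\sim_a y$ it might generate via $(\text{Ref}_a)$ serves no purpose below the point where $y$ was born.

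The heart of the argument is then a \emph{shortening} move: if $(\text{Ref}_a)$ removes $x\sim_a x$ for an eigenvariable $x$, I would show the entire application is redundant and can be pruned, contradicting minimality. Concretely, because $x$ is fresh at the point it was introduced and the atom $x\sim_a x$ is not needed to close any initial sequent involving a variable of the endsequent, one can permute the $(\text{Ref}_a)$ application upward past the intervening rules (which do not mention $x$ in their principal formulas) and eliminate it, or else show the subderivation above already derives the conclusion without it. This yields a strictly shorter derivation of the same endsequent, contradicting the assumption that the derivation was minimal. Hence every $x\sim_a x$ removed by $(\text{Ref}_a)$ in a minimal derivation must have $x$ occurring in $\Gamma$ or $\Delta$.

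The main obstacle I anticipate is making the permutation/pruning step fully rigorous, since one must verify that $(\text{Ref}_a)$ genuinely commutes with each of the other rules of $G_\msf{PAL}$ that could lie between the $(\text{Ref}_a)$ application and the place where the eigenvariable was introduced, including the reduction rules. This requires a careful case analysis on the intervening rule and an appeal to the height-preserving admissibility of weakening (Theorem~\ref{thm:adm}, transferred to $G_\msf{PAL}$ via Corollary~\ref{thm:rulesPAL}) to discard the now-unused atom $x\sim_a x$. Because the reduction rules never introduce or inspect relational atoms, they commute trivially, so the only substantive cases are the relational rules and $(K_a\Imp)$, $(\Imp K_a)$, which is exactly where the eigenvariable bookkeeping pays off.
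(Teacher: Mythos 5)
Your overall plan (trace the atom upward, use minimality to prune a redundant $(\mathrm{Ref}_a)$ step) is the same family of argument the paper defers to via Negri's Lemma~6.2, but two of your claims are wrong, and the second is fatal. First, $(\Imp K_a)$ is \emph{not} ``the sole rule that, read root-first, introduces a genuinely fresh variable'': $(\mathrm{Ref}_a)$ itself, read root-first, can introduce $x\sim_a x$ for a brand-new $x$ --- that is exactly the source of non-termination (item (2) in the paper's list) that this lemma is meant to eliminate, so your variable classification omits the very case under discussion. Second, and decisively, your pruning step asserts that the rules between the $(\mathrm{Ref}_a)$ application and the birth of the eigenvariable ``do not mention $x$ in their principal formulas,'' so that such an application is always redundant. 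This is false: once $(\Imp K_a)$ introduces the eigenvariable $x$ together with the succedent formula $x:\psi$, the rules above decompose $x:\psi$ and may then apply $(K_a\Imp)$ with a principal formula labelled $x$, consuming precisely the atom $x\sim_a x$. Concretely, take the valid formula $K_a(K_ap\to p)$. Any derivation of ${}\Imp x:K_a(K_ap\to p)$ must, after $(\Imp K_a)$ (eigenvariable $y$) and $(\Imp\to)$, derive $y:K_ap,\,x\sim_ay\Imp y:p$, and the only way to get $y:p$ into the antecedent is $(K_a\Imp)$ on $y:K_ap$ with an atom $y\sim_ay$; producing $y\sim_ay$ by $(\mathrm{Ref}_a)$ costs one rule, whereas producing it by $(\mathrm{Sym}_a)$ followed by $(\mathrm{Trans}_a)$ costs two. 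Hence the \emph{minimal} derivation applies $(\mathrm{Ref}_a)$ to the eigenvariable $y$, which does not occur in the endsequent, and that application cannot be pruned. So under your reading of $\Gamma\Imp\Delta$ as the endsequent of the whole derivation, the statement you set out to prove is actually false, and no permutation argument can rescue it.

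The lemma is true (and is what the termination argument needs) only in its local reading, which is also what the tracing proof of Negri's Lemma~6.2 establishes: for each application of $(\mathrm{Ref}_a)$ in a minimal derivation, the variable of the removed atom occurs in the conclusion $\Gamma\Imp\Delta$ \emph{of that application}; in other words, root-first, $(\mathrm{Ref}_a)$ never needs a fresh variable. (In the example above, $y$ does occur in $y:K_ap,\,x\sim_ay\Imp y:p$.) The correct argument exploits freshness of $x$ with respect to that conclusion: above the premise $x\sim_ax,\Gamma\Imp\Delta$, no labelled formula $x:\psi$ can ever appear (the eigenvariable condition forbids $(\Imp K_a)$ from choosing $x$, and every other rule that creates an $x$-labelled formula requires a pre-existing $x$-labelled formula or a pre-existing atom $u\sim_a x$ with $u\neq x$, of which there are none); no initial sequent can have $x\sim_ax$ principal, since no rule introduces relational atoms into succedents and $\Delta$ contains none mentioning $x$; and applications of $(\mathrm{Trans}_a)$ or $(\mathrm{Sym}_a)$ to $x\sim_ax$ only duplicate an atom already present, which contradicts minimality via height-preserving contraction. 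Hence $x\sim_ax$ is never principal above, so it can be erased from every sequent of the subderivation and the $(\mathrm{Ref}_a)$ step deleted, yielding a strictly shorter derivation --- contradiction. Note finally that this erasure is not an application of weakening, as you suggest (weakening \emph{adds} formulas); deleting an antecedent atom requires exactly the never-principal observation above.
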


\begin{proof}
It can be proved by tracing the relational atom $x\sim_ax$ up in the derivation. For a detailed proof, see Lemma 6.2 of \cite{negri2005proof}. 
\end{proof}

For (3), we need to show the following lemma:

\begin{lem}
Rule $(K_a\!\Imp)$ permutes down with respect to $(\Imp\!K_a)$ in case the principal relational atom of $(K_a\!\Imp)$ is not active in $(\Imp\!K_a)$. Moreover, rule $(K_a\!\Imp)$ permutes down with respect to all other rules.
\end{lem}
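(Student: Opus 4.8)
# Proof Proposal

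The plan is to prove this permutation lemma by a careful case analysis on the rule below which $(K_a{\Imp})$ sits in a given derivation. The key observation is that $(K_a{\Imp})$ is a left rule whose principal formula $x:K_a\varphi$ and principal relational atom $x\sim_a y$ both \emph{persist} into its premise (they appear in both conclusion and premise), while it merely adds the active formula $y:\varphi$. This ``additive'' character is exactly what makes downward permutation possible: the active formula $y:\varphi$ can always be carried along passively through whatever rule is applied below.

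First I would set up the general picture. Suppose we have two consecutive applications where $(K_a{\Imp})$ is applied first (higher up) and some rule $(\mc{R})$ is applied immediately below it, and we wish to exchange their order so that $(\mc{R})$ is applied first and $(K_a{\Imp})$ second. Because $x:K_a\varphi$ and $x\sim_a y$ reappear in the premise of $(K_a{\Imp})$, they are available to be principal in the permuted derivation only if they are not consumed by $(\mc{R})$. The concern therefore reduces to checking that the principal formula and principal relational atom of $(K_a{\Imp})$ are not the active material of $(\mc{R})$. For the main clause (permuting past $(\Imp K_a)$), the hypothesis explicitly assumes the principal relational atom $x\sim_a y$ of $(K_a{\Imp})$ is not active in $(\Imp K_a)$; I would then verify that the eigenvariable condition of $(\Imp K_a)$ is still respected after permutation, since $y$ (introduced by $(K_a{\Imp})$) is distinct from the fresh eigenvariable of $(\Imp K_a)$ precisely under this non-activeness assumption.

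Next I would dispatch the ``all other rules'' clause systematically. The rules of $G_\msf{PAL}$ fall into groups: propositional rules, the relational rules $(\text{Ref}_a)$, $(\text{Trans}_a)$, $(\text{Sym}_a)$, the remaining modal rule $(\Imp K_a)$, and the reduction rules. For each, I would note that the principal material of $(K_a{\Imp})$ is a labelled $K_a$-formula and a relational atom, neither of which is ever the active (consumed) formula of a propositional or reduction rule, and that relational rules only add relational atoms without deleting the one principal to $(K_a{\Imp})$. In every such case the two applications act on disjoint principal material, so the permutation is immediate: apply $(\mc{R})$ first (with $y:\varphi$ weakened along passively), then reintroduce $y:\varphi$ by $(K_a{\Imp})$. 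The derivation height is preserved throughout.

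The main obstacle I anticipate is the interaction with $(\Imp K_a)$ and with the relational rules, because these are the rules that create or manipulate relational atoms, and $(K_a{\Imp})$ depends on a relational atom being present. For $(\Imp K_a)$ the delicate point is the eigenvariable side condition, which is why the lemma carves out the case where $x\sim_a y$ is \emph{not} active; I would make explicit that when $x\sim_a y$ \emph{is} active in $(\Imp K_a)$ the permutation can genuinely fail, justifying the restriction in the statement. For the relational rules one must confirm that permuting down does not remove the relational atom $x\sim_a y$ needed by $(K_a{\Imp})$; since $(\text{Ref}_a)$, $(\text{Trans}_a)$, $(\text{Sym}_a)$ only \emph{introduce} relational atoms reading root-first and never delete them, this is straightforward. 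The remaining cases are routine bookkeeping that I would only sketch.
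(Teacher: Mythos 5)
The paper itself gives no proof of this lemma (it is stated in the proof-search section, which explicitly defers details to Negri's treatment of labelled $\msf{S5}$), so the comparison is against that standard argument. Your general setup is the right one and matches it: case analysis on the rule $\mc{R}$ below, exploiting the fact that the principal formula $x:K_a\varphi$ and the principal relational atom $x\sim_a y$ of $(K_a\!\Imp)$ persist into its premise while only the active formula $y:\varphi$ is added; and your treatment of the $(\Imp\!K_a)$ case is correct, including the key point that non-activeness of $x\sim_a y$ forces $y$ to differ from the eigenvariable (because $x\sim_a y$ then survives into the conclusion of $(\Imp\!K_a)$, where the eigenvariable may not occur) and that the permutation genuinely fails when the atom is active.

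The gap is in the ``all other rules'' clause: your claim that the principal material of $(K_a\!\Imp)$ is \emph{never} the active formula of a propositional or reduction rule, and that relational rules never delete the needed atom, is false. The active formula of $(R5\Imp)$ is $x:K_a\p{\varphi}\psi$, a labelled $K_a$-formula, so $(K_a\!\Imp)$ with principal $x:K_a\p{\varphi}\psi$ can sit immediately above an $(R5\Imp)$ that consumes it; similarly $(\land\!\Imp)$ with principal $x:K_a\psi\land\chi$, $(\to\!\Imp)$ with principal $x:\chi\to K_a\psi$, or a second $(K_a\!\Imp)$ whose active formula is itself a $K_a$-formula. Moreover, read top-down --- the reading relevant to permuting down --- $(Ref_a)$, $(Sym_a)$ and $(Trans_a)$ each \emph{delete} a relational atom, so they can consume the atom principal to the $(K_a\!\Imp)$ above; this exact configuration occurs in the derivation of $K_a\varphi\to\varphi$, where $(K_a\!\Imp)$ is applied on $x\sim_a x$ immediately above the $(Ref_a)$ that removes it. In all these configurations the permutation really does fail (after applying $\mc{R}$ first, the formula or atom that $(K_a\!\Imp)$ needs as principal is gone), so your argument, taken literally, would establish a false unconditional statement. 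What is missing is the Kleene-style proviso, implicit in the lemma and in Negri's proof: $(K_a\!\Imp)$ permutes down past $\mc{R}$ \emph{provided its principal formula and principal relational atom are not active in $\mc{R}$}; the lemma singles out $(\Imp\!K_a)$ because for that rule only the relational atom can interfere (its left active formula is a relational atom and its other active formula is on the right). You should state this proviso and observe that it is exactly what holds where the lemma is used: between two applications of $(K_a\!\Imp)$ on the same principal pair, that pair persists as side formulas and hence is never active in the intervening rules. A smaller inaccuracy: height is not preserved in general --- permuting past a two-premise rule requires (height-preserving) weakening of the other premise and can increase height by one; what the minimality argument actually needs is that the number of rule applications does not grow.
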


Then we can prove the following proposition:

\begin{prop}
$(K_a \! \Imp)$ and $(\Imp \! K_a)$ cannot be applied more than once on the same pair of principal formulas on any branch in any derivation in $G_\msf{PAL}$.
\end{prop}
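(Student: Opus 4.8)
The plan is to work inside a \emph{minimal} derivation---one admitting no shortening, which is all that root-first proof search needs to explore---and to derive a contradiction from any supposed repetition, using the permutation lemma above together with the height-preserving admissibility of contraction (established for $G_\msf{EL}$ in Theorem~\ref{thm:adm} and carried over to $G_\msf{PAL}$; cf.\ Corollary~\ref{thm:rulesPAL}). I treat the two rules separately, the left rule being the substantial case.

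For $(K_a\!\Imp)$, suppose toward a contradiction that on some branch of a minimal derivation the rule is applied twice with the same principal pair, i.e.\ on the same labelled formula $x\!:\!K_a\varphi$ and the same relational atom $x\sim_a y$; call the two applications $K_1$ (nearer the leaves) and $K_2$ (nearer the root). By the preceding lemma, $(K_a\!\Imp)$ permutes downward past every rule separating them, so I may permute $K_1$ down until it sits immediately above $K_2$, producing the segment
\[
\AxiomC{$y\!:\!\varphi,\, y\!:\!\varphi,\, x\!:\!K_a\varphi,\, x\sim_a y,\, \Gamma\Imp\Delta$}
\ri{\f$(K_a\!\Imp)$}
\UnaryInfC{$y\!:\!\varphi,\, x\!:\!K_a\varphi,\, x\sim_a y,\, \Gamma\Imp\Delta$}
\ri{\f$(K_a\!\Imp)$}
\UnaryInfC{$x\!:\!K_a\varphi,\, x\sim_a y,\, \Gamma\Imp\Delta$}
\DisplayProof
\]
whose topmost sequent carries two copies of the active formula $y\!:\!\varphi$. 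Height-preserving admissibility of $(c\!\Imp)$ then yields a derivation of $y\!:\!\varphi,\, x\!:\!K_a\varphi,\, x\sim_a y,\, \Gamma\Imp\Delta$ of no greater height, and a single $(K_a\!\Imp)$ recovers the conclusion with strictly smaller height than the doubled segment. This shortening contradicts minimality, so $(K_a\!\Imp)$ is used at most once per principal pair.

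For $(\Imp\!K_a)$ no permutation is needed. Read from the root upward, this rule deletes its principal formula $x\!:\!K_a\varphi$ from the succedent and replaces it by $y\!:\!\varphi$ together with the fresh atom $x\sim_a y$, the eigenvariable $y$ occurring nowhere below. As no rule reintroduces a principal $K_a$-formula into a premise out of nothing, a fixed occurrence of $x\!:\!K_a\varphi$ can be principal for $(\Imp\!K_a)$ at most once on the branch; and two parallel copies of $x\!:\!K_a\varphi$ are collapsed by the height-preserving admissibility of $(\Imp\!c)$, so a second application is again superfluous in a minimal derivation.

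The step I expect to be delicate is making $K_1$ and $K_2$ adjacent, because the permutation lemma forbids pushing $(K_a\!\Imp)$ past an $(\Imp\!K_a)$ whose active relational atom is exactly the principal atom $x\sim_a y$ of $(K_a\!\Imp)$. I would dispose of this as follows: any $(\Imp\!K_a)$ for which $x\sim_a y$ is active carries $x\sim_a y$ in its premise but not in its conclusion, so $x\sim_a y$ is present above it and absent below it. Since both $K_1$ and $K_2$ require $x\sim_a y$ to occur, such an $(\Imp\!K_a)$ cannot lie strictly between them; every intervening rule is therefore one past which $(K_a\!\Imp)$ permutes freely, and the adjacency step goes through.
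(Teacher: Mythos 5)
Your overall strategy --- work in a minimal derivation, permute the upper $(K_a\!\Imp)$ down via the permutation lemma until it is adjacent to the lower one, collapse the duplicated active formula $y\!:\!\varphi$ by contraction, and obtain a shortening that contradicts minimality --- is exactly the argument the paper intends: the paper states the proposition without proof and defers to Negri's treatment of $\msf{S}5$, and your proof is that standard argument. Your handling of the delicate case is also correct, and arguably cleaner than the usual eigenvariable argument: an intervening $(\Imp\!K_a)$ whose active atom is $x\sim_a y$ would erase $x\sim_a y$ from its conclusion, and since no rule of $G_\msf{PAL}$, read downwards, ever introduces a relational atom into an antecedent, the atom would stay absent all the way down to $K_2$, contradicting its being principal there. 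Note too that your restriction to \emph{minimal} derivations is not optional: without it the statement is false, since admissible weakening lets one pad any derivation with a redundant repetition of $(K_a\!\Imp)$, so "any derivation" in the paper's wording must be read as "any minimal derivation", as you do.

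The one genuine gap is your appeal to \emph{height-preserving} contraction in $G_\msf{PAL}$, which you claim is "carried over" from $G_\msf{EL}$, citing Corollary~\ref{thm:rulesPAL}. That corollary gives only plain admissibility: its proof routes through the translation $t$ and Theorem~\ref{thm:10}, and while Theorem~\ref{thm:10}(2) preserves height, the return direction (1) does not --- indeed the Remark following Theorem~\ref{thm:10} explicitly stresses that the reconstructed $G_\msf{PAL}$-derivation is longer. With only plain admissibility, contracting $y\!:\!\varphi, y\!:\!\varphi$ may yield an arbitrarily tall derivation of $y\!:\!\varphi, x\!:\!K_a\varphi, x\sim_a y, \Gamma\Imp\Delta$, and the single application of $(K_a\!\Imp)$ on top of it need not be shorter than the doubled segment, so no contradiction with minimality follows. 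To close the gap you must prove height-preserving admissibility of contraction for $G_\msf{PAL}$ directly, by the standard induction on derivation height; this in turn requires height-preserving invertibility of the reduction rules $(R1\Imp)$--$(R6\Imp)$ and their right-hand counterparts, which holds by routine inductions but appears nowhere in the paper. (A smaller blemish: in your $(\Imp\!K_a)$ paragraph, the claim that "no rule reintroduces a principal $K_a$-formula out of nothing" is false read root-first --- e.g.\ $(\neg\Imp)$ applied to $x:\neg K_a\varphi$ puts $x:K_a\varphi$ back into the succedent --- but since $(\Imp\!K_a)$ consumes the particular occurrence it acts on, the claim about a fixed occurrence is trivially true; your contraction remark there again needs the height-preserving version.)
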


For (4) we need to prove the following proposition:
\begin{prop}
In a minimal derivation of a sequent in $G_\msf{PAL}$, for each formula $x:K_a \varphi$ in its positive part there are at most $n(K_a)$ applications of $(\Imp K_a)$ iterated on a chain of accessible worlds $x\sim_ax_1, x_1\sim_ax_2,\ldots$ with principal formula $x_i: K_a \varphi$, where $n(K_a)$ denotes the number of $K_a$ in the negative part of $\Gamma \Imp \Delta$.
\end{prop}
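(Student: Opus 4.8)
The plan is to follow the loop-checking strategy for $\msf{S}5$-style calculi from \cite{negri2005proof}, adapting it to the chains produced by the interaction of $(\Imp K_a)$ with the relational rules. Fix a formula $x:K_a\varphi$ occurring positively in the endsequent $\Gamma\Imp\Delta$, and consider a branch of a minimal derivation on which $(\Imp K_a)$ is applied repeatedly along a chain $x\sim_ax_1,x_1\sim_ax_2,\ldots$, the $i$-th application having principal formula $x_i:K_a\varphi$ and introducing the fresh world $x_{i+1}$ together with the active formula $x_{i+1}:\varphi$. First I would record two structural facts about such a chain. Using $(Sym_a)$ and $(Trans_a)$ one checks that all the worlds $x,x_1,x_2,\ldots$ become mutually $\sim_a$-accessible, so that they are interchangeable as far as the modal rules are concerned; and by the preceding proposition no pair $(\Imp K_a)$/$(K_a\Imp)$ can be repeated on the same principal formulas along the branch, so distinct applications in the chain must differ in a way that can be tracked.

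Next I would pin down where each right-hand principal formula $x_i:K_a\varphi$ comes from. By the weak subformula property every labelled formula in the derivation is a labelled instance of a semi-subformula of an endsequent formula, so $K_a\varphi$ occurs in $\Gamma\Imp\Delta$. The point is to track polarity root-first: a succedent occurrence of $K_a\varphi$ can only be regenerated from an occurrence of $K_a$ sitting in the \emph{negative} part of the endsequent, since the only way an occurrence migrates from an antecedent context to the succedent is through a polarity-flipping decomposition ($(\neg\Imp)$, or the left premise of $(\to\Imp)$), while the remaining propositional, modal and reduction rules respect the polarity of their $K_a$-subformulas. Consequently the number of distinct \emph{sources} in $\Gamma\Imp\Delta$ that can feed the principal formulas $x_i:K_a\varphi$ of the chain is at most $n(K_a)$, the number of $K_a$ in the negative part.

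Finally I would close the argument by pigeonhole against minimality. If the chain contained more than $n(K_a)$ applications of $(\Imp K_a)$, two of them, at worlds $x_i$ and $x_j$ with $i<j$, would have principal formulas $x_i:K_a\varphi$ and $x_j:K_a\varphi$ descending from the \emph{same} negative-$K_a$ source, i.e. a genuine repetition of the modal configuration along the chain. Since $x_i$ and $x_j$ are mutually accessible by the first step, one can relabel $x_j$ to $x_i$ and delete the redundant $(\Imp K_a)$ step, absorbing the duplicated relational atoms and labelled formulas by the height-preserving contraction rules of Corollary~\ref{thm:rulesPAL}; this produces a strictly shorter derivation of the same sequent, contradicting minimality. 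The main obstacle is precisely this last shortening: one must verify that after relabelling, every relational atom and active formula needed above $x_j$ is still available at $x_i$ (which is where the $\msf{S}5$ closure of $\sim_a$ and the admissibility of contraction do the real work), and that the excision genuinely decreases the derivation height rather than merely rearranging it. Making this loop-elimination precise, together with matching the exact constant $n(K_a)$, is the crux of the argument; the remaining bookkeeping with the weak subformula property and polarities is routine, and we refer to \cite{negri2005proof} for the detailed construction.
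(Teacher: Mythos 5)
The paper gives no proof of this proposition at all---Section 5 only sketches the termination argument and defers all details to \cite{negri2005proof}---and your proposal reconstructs exactly that intended Negri-style argument: a fresh chain world can only acquire a succedent copy of $K_a\varphi$ via $(K_a\Imp)$ applied to antecedent formulas that trace back (weak subformula property plus polarity preservation) to $K_a$-occurrences in the negative part, so exceeding the bound forces two iterated applications to share a source, and the resulting repetition is excised by relabelling, contradicting minimality. This is correct in approach and in fact more detailed than the paper's own treatment; the one point to keep straight---which you already flag as the crux---is that the exact constant $n(K_a)$ comes out because only the iterated applications at $x_1,x_2,\ldots$ consume negative-part sources (the initial application at $x$ is licensed by the positive-part occurrence itself), the bookkeeping for which is carried out in \cite{negri2005proof}.
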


\section{Conclusion, comparison and future research}

In this paper we proposed a labelled sequent calculus for $\msf{PAL}$. It is based on a proof system for $\msf{EL}$, namely poly-modal $\msf{S5}$, extended with sequent rules to deal with announcement operators that directly mirror the $\msf{PAL}$ reduction rules. We also determined that the obtained system allows terminating proof search, and compared our system with the calculus for $\msf{EL}$ on matters such a height preservation of derivations. 

Various proposals for sequent calculi for PAL have been made in \cite{balbiani2014sequent,BAL2010,maffezioli2010gentzen,NOM2016}. The labelled sequent calculi in \cite{balbiani2014sequent,NOM2016} for $\msf{PAL}$ are based on a reformulation of the semantics for $\msf{PAL}$. 
The notion of a model restriction to a (single) formula is generalized in these works to that of a model restriction to a list of formulas. Denote by $\alpha$ or $\beta$ finite lists $(\varphi_1, \ldots, \varphi_n)$ of formulas, and by $\epsilon$ the empty list. For any list $\alpha =(\varphi_1,\ldots, \varphi_n)$ of formulas, define $\mc{M}^\alpha$ recursively as follows: $\mc{M}^\alpha: = \mc{M}$ (if $\alpha = \epsilon)$, and $\mc{M}^\alpha := (\mc{M}^\beta)^{\varphi_n} = (W^{\beta, \varphi_n}, (\sim^{\beta, \varphi_n}_a)_{a \in \msf{Ag}},V^{\beta, \varphi_n})$ (if $\alpha =\beta, \varphi_n$).

Then an equivalent semantics for $\msf{PAL}$ is defined as follows:
\[
\begin{array}{lll}
\mathcal{M}^{\alpha,\varphi},w \Vdash p &~\text{iff}~ & \mathcal{M}^\alpha, w\Vdash \varphi \text{~and~} \mc{M}^\alpha, w\Vdash p
\\
\mathcal{M}^\alpha,w \Vdash \neg \varphi      &~\text{iff}~& \mathcal{M}^\alpha,w \nVdash \varphi
\\
\mathcal{M}^\alpha,w \Vdash \varphi \land \psi   &  ~\text{iff}~          & \mathcal{M}^\alpha,w \Vdash \varphi        ~\text{and}~         \mathcal{M}^\alpha,w \Vdash \psi
\\
\mathcal{M}^\alpha,w \Vdash \varphi \to \psi   &  ~\text{iff}~          & \mathcal{M}^\alpha,w \nVdash \varphi        ~\text{or}~         \mathcal{M}^\alpha,w \Vdash \psi
\\
\mathcal{M}^\alpha,w \Vdash K_{a}\varphi & ~\text{iff}~ & \text{for~all}~ v  \in W, w\sim^\alpha_{a}v ~\text{implies}~ \mathcal{M}^\alpha,v \Vdash \varphi
\\
\mathcal{M}^\alpha,w \Vdash \p{\varphi}\psi  &~\text{iff}~  & \mathcal{M}^\alpha,w \Vdash \varphi  ~\text{implies}~ \mathcal{M}^{\alpha,\varphi},w \Vdash \psi
\end{array}
\]

With this semantics, Balbiani \cite{balbiani2014sequent} and Nomura et al. \cite{NOM2016}\footnote{Nomura et al. \cite{nomura2020cut} extended the method developed in \cite{NOM2016} to action modal logic.}
 developed different labelled sequent calculi for $\msf{PAL}$, simultaneously repairing some perceived deficiencies in the previously published in Maffezioli and Negri \cite{maffezioli2010gentzen}. The calculus in Balbiani \cite{balbiani2014sequent} admits cut and allows terminating proof search, and the calculus in Nomura et al. \cite{NOM2016} also admits cut. 
Note that these calculi, unlike ours, are for $\msf{PAL}$ based on the normal modal logic $\msf{K}$, whereas we took $\msf{PAL}$ as an extension of  $\msf{S}5$ (like \cite{hakli2007proof,maffezioli2010gentzen}). We therefore also include the usual inference rules for $\msf{S}5$ (i.e., $(Ref_a)$, $(Trans_a)$ and $(Sym_a)$) into our calculus. Further, our calculus uses sequent rules based on $\msf{PAL}$ reduction axioms to deal with the announcement operators.


This method can be directly applied to action model logic \cite{DIT2007} since this logic is also an extension of $\msf{EL}$ with some reduction axioms, of which PAL is a special case.

There are many extensions of other logics with public announcement that also involve  reduction axioms for such announcements, such as intuitionistic PAL \cite{ma2014algebraic}, bilattice PAL \cite{rivieccio2014algebraic}, {\L}ukasiewicz PAL \cite{LucasiPAL}, and variations/extensions of action model logic like bilattice logic of epistemic actions and knowledge \cite{bakhtiari2020bilattice}. We can try to develop calculi for these logics with the steps similar to those for $\msf{PAL}$.


\bibliographystyle{splncs04}
\bibliography{ref}

\end{document}